%% file: IV_BEC_wo_GSI_arXiv.tex
\documentclass[journal, draftcls, onecolumn]{IEEEtran}

\usepackage[utf8]{inputenc} 
\usepackage[T1]{fontenc}
\usepackage{url}
\usepackage{ifthen}
\usepackage{cite}
\usepackage[cmex10]{amsmath}

\input{./macro_ihwang.tex}

\usepackage{amssymb,amsfonts,mathtools}
\usepackage{graphicx}
\usepackage{cite}           
\usepackage{booktabs}
\usepackage{array}
\usepackage{balance}        
\usepackage{bm}
\usepackage{bbm}

\usepackage{setspace}
\usepackage{dsfont}
\usepackage{makecell}
\usepackage{xcolor}
\usepackage{enumitem}
\usepackage{verbatim}

\usepackage{tikz}
\usetikzlibrary{positioning,arrows.meta,calc}
\tikzset{
  nodec/.style={draw, circle, minimum size=10mm, inner sep=0pt, font=\tiny},
  chblock/.style={draw, rectangle,
                  minimum height=8mm, minimum width=14mm, align=center, font=\small},
  >={Stealth}
}

\newcommand{\lsepk}{l^{(k)}_{\mathrm{sep}}}

\newcommand{\dsep}{\delta_{\mathrm{sep}}}

\newcommand{\poly}[1]{#1\text{-}\mathrm{poly}}
\newcommand{\linear}[1]{#1\text{-}\mathrm{linear}}
\newcommand{\gsi}{\mathrm{GSI}}

\newtheorem{theorem}{Theorem}[section]

\newtheorem{lemma}[theorem]{Lemma}
\newtheorem{proposition}[theorem]{Proposition}

\newtheorem{definition}[theorem]{Definition}
\newtheorem{remark}[theorem]{Remark}

\newtheorem{assumption}[theorem]{Assumption}

\def \srcpath {./arXiv_v1/}

\allowdisplaybreaks

\begin{document}
\title{On the Information Velocity over a \\Tandem of Erasure Channels}


\author{
  \IEEEauthorblockN{Kai-Chun Chen and I-Hsiang Wang}\\
  \IEEEauthorblockA{Graduate Institute of Communication Engineering\\ 
                    National Taiwan University, Taipei, Taiwan\\
                    Email: \{r13942049, ihwang\}@ntu.edu.tw}
}

\maketitle


\begin{abstract}
\input{\srcpath abstract.tex}
\end{abstract}

\section{Introduction}\label{sec: intro}
\input{\srcpath sec_intro.tex}

\section{Problem Formulation and Preliminaries}\label{sec: formulation}
\input{\srcpath sec_formulation.tex}

\section{Main Results}\label{sec: main_result}
\input{\srcpath sec_results.tex}

\section{Achievability}\label{sec:achi}
\input{\srcpath sec_achievability.tex}


\section{Numerical Comparison}\label{sec: num_comp}
\input{\srcpath sec_numerical.tex}
\bibliographystyle{IEEEtran}
\bibliography{./bib/Ref.bib}

\appendices

\section{Proof of Proposition~\ref{prop: colli}}\label{app:omitted_proofs}
\input{\srcpath app_omitted_proofs.tex}

\section{Converse: Proof of Theorem~\ref{thm: conv_hetero_fb}}\label{sec:conv}
\input{\srcpath sec_converse.tex}

\section{Achievability without GSI for the heterogeneous case}\label{app:het_achieve_woGSI}
\input{\srcpath app_achieve_het_wo_GSI.tex}

\section{Achievability with GSI}\label{app:achieve_w_GSI}
\input{\srcpath app_achieve_w_GSI.tex}


%
%
%
%
%
%
%

\end{document}

%% file: macro_ihwang.tex
\newcommand{\lp}{\left(}
\newcommand{\rp}{\right)}
\newcommand{\lb}{\left[}
\newcommand{\rb}{\right]}
\newcommand{\lbp}{\left\{}
\newcommand{\rbp}{\right\}}
\newcommand{\lba}{\left\lvert}
\newcommand{\rba}{\right\rvert}

\newcommand{\mv}{\,\middle\vert\,}

\newcommand{\mcal}{\mathcal}

\newcommand{\mbb}{\mathbb}

\newcommand{\ra}{\rightarrow}

\newcommand{\eqFunc}{\overset{\mathrm{f}}{=}}
\newcommand{\eqDef}{:=}

\newcommand{\E}{\mathsf{E}}

\newcommand{\W}{\mathsf{W}}

\newcommand{\Var}{\mathsf{Var}}

\renewcommand{\Pr}{\mathsf{P}}

\newcommand{\Indc}[1]{\mathbbm{1}{\lbp #1\rbp}}
\newcommand{\MI}[2]{\mathrm{I}\mkern-1.5mu\left( #1 ; #2\right)}

\newcommand{\CMI}[3]{\mathrm{I}\mkern-1.5mu\left( #1 ; #2  \middle\vert #3\right)}
\newcommand{\ET}[1]{\mathrm{H}\mkern-1.5mu\left( #1\right)}
\newcommand{\CET}[2]{\mathrm{H}\mkern-1.5mu\left( #1  \middle\vert #2\right)}

%% file: arXiv_v1/abstract.tex
Information velocity (IV) is a recently proposed notion to capture the speed of reliable information dissemination over a large-scale network. It is the speed at which reliable end-to-end communication over $k$ hops can be achieved within $t$ time instances, and is defined formally as the asymptotic ratio $k/t$ as $k$ tends to infinity subject to vanishing error probability. To date, even for a tandem of binary erasure channels without feedback, the optimal IV for disseminating multiple (say $m$) bits remains unknown. 
We make progress on this open problem by characterizing the optimal IV for the regime where the message size $m = o(k^{1/2})$. The main contribution lies in achievability, where 
we propose a simple \emph{bit-separation} scheme that pipelines message bits in an orderly fashion with carefully designed temporal spacing so that the flows of different bits do not collide with one another with high probability. This is in sharp contrast to previous attempts in the literature where schemes involve coding over time and across nodes. To go beyond the regime $m = o(k^{1/2})$, we further investigate the setting where every node in the network has strictly causal access to the state information of the BEC links in the entire network. For such a setting with global state information (GSI), we develop an enhanced scheme and characterize the optimal IV for the regime where the message size $m = o(k)$. Interestingly, for the regime $m = o(k^{1/2})$, GSI does not improve the information velocity.

%% file: arXiv_v1/sec_intro.tex
Efficient and timely information dissemination over a large-scale network plays a key role in many applications.
Traditional network information theory has been focused on the notion of network \emph{capacity}, that is, the optimal data rate of reliable communication that can be achieved within a \emph{fixed-size} network in the asymptotic regime where the communication time tends to infinity. Many results have been derived in the literature (see \cite{ElGamalKim_11} for a comprehensive treatment), and in particular, when the network is a tandem of noisy channels, it is straightforward to characterize the network capacity, and it is just the minimum of link capacities in the tandem. 

Meanwhile, such a framework that aims to characterize the network capacity may not well capture the relation between the scale of the communication time and the size of the network.
One of the alternative notions is the \emph{information velocity} (IV) coined in \cite{Huleihel2019}, telling the speed of reliable information dissemination over a large-scale network.
Information velocity is the asymptotic ratio of the ``distance'' (the number of hops from the source to the ``farthest'' node where the message can reliably reach) over the total communication time, as the distance (or time) scales to infinity.
Since a general large-scale network consists of many nodes connected by noisy links, the key to achieving good information velocity is how to judiciously strike the balance between timeliness (latency) and reliability. 
This leads to a new design paradigm of the encoding/decoding functions at the intermediate relay nodes as well as the overall network transmission protocol.

To date, information velocity has been investigated mainly for a tandem of noisy channels. Despite the simplicity of the network topology, as pointed out in \cite{Rajagopalan1994,Huleihel2019}, it is non-trivial to even achieve strictly positive information velocity for general noisy channels. 
The literatures are focused on binary symmetric channels (BSC) \cite{Rajagopalan1994,Huleihel2019,LingScarlett_22}, erasure channels \cite{DomanovitzPhilosof_22,Inovan2024}, and AWGN channels with feedback \cite{Inovan2024,Domanovitz2024}. 
For a tandem of BSC's, Ling and Scarlett \cite{LingScarlett_22} showed that strictly positive IV is achievable. They modified a recursive scheme proposed by Rajagopalan and Schulman in the 90's \cite[Section 6]{Rajagopalan1994} by using a large recursive Hamming code, where different relays perform decode-and-forward at pre-assigned levels. Although a relay may incorrectly decode some coded bits and forward them without detection, the recursive structure ensures that the error probability eventually vanishes. However, the characterization of the optimal IV remains open even for disseminating a single bit, let alone the general case where the message size may scale with the distance (or time). 
For a tandem of AWGN channels, a strictly positive IV was shown to be achievable when every link has feedback from the receiving node to the transmitting node \cite{Inovan2024,Domanovitz2024}. Nevertheless, optimality was not proved, and when there is no feedback, the understanding remains largely open.

The situation becomes more promising when it comes to the case of a tandem of erasure channels. For disseminating a single bit, it is straightforward to achieve the optimal IV by a simple forward-the-last-received scheme. For more than a single bit, Domanovitz \textit{et.al} analyzed a simple Automatic Repeat reQuest (ARQ) scheme assuming that feedback is available and achieved the optimal IV. When there is no feedback, to achieve a strictly positive IV for disseminating more than a single bit, Inovan \cite{Inovan2024} constructed a structured tree code that enables relays to forward message bits in a streaming manner, in order and with controlled reliability. To cope with relays’ uncertainty, the code introduces an additional ``unknown'' branch, allowing a relay to declare that it currently has insufficient information, thereby preventing error amplification. The use of the tree code was inspired by another scheme by Rajagopalan and Schulman \cite[Section 3]{Rajagopalan1994} for BSC's. Unfortunately, the achievable IV does not have a simple form, and numerical evaluation reveals that it is far away from the converse bound derived in \cite{Inovan2024}.

In this work, we make progress on the understanding of information velocity for the tandem of binary erasure channels (BEC). We provide the first characterization of the optimal information velocity for disseminating a general multi-bit message over a tandem of BEC links without feedback. The key idea is to observe that a simple \emph{bit-separation} scheme that pipelines bits in an orderly fashion works well if the timing of sending different bits are carefully spaced so that the flows of different bits do not collide with one another. Our analysis shows that in terms of the scaling of the message size ($m$ bits) with the distance ($k$ hops), the achievable IV of our scheme matches the converse up to the regime $m=\Theta(k^\rho)$, for any $\rho<1/2$. 
The proposed scheme is similar to that in \cite{DomanovitzPhilosof_22} except that neither feedback nor state information of other links is needed. The ``uncoded'' spirit of our scheme is in sharp contrast to the ``coded'' spirit of existing schemes without feedback \cite{Rajagopalan1994,LingScarlett_22,Inovan2024}, where mixing bits over time and across nodes ends up costing additional time to ensure that the nodes are ``on the same page'' for reliable decoding, and as a result, only achieving strictly suboptimal IV. 
We also revisit the scheme of \cite{DomanovitzPhilosof_22} and point out that it seems necessary for each node in the tandem network to have strictly causal access to not only the state information of the succeeding hop (feedback) and the preceding hop, but also that of all the other preceding hops in the entire network. Under such an assumption that \emph{global state information} (GSI) is available, the achievable information velocity is then analyzed by 
leveraging results on last-passage percolation (LPP) \cite{Seppaelaeinen1998, Johansson2000}, which turns out to match the converse up to the regime $m=\Theta(k^\rho)$, for any $\rho<1$. 
LPP is well-studied in probability theory, and it happens to be equivalent to a certain tandem queueing network that appears in our analysis.

Our results indicate that for the regime $m=\Theta(k^\rho)$, GSI does not improve the information velocity for $\rho < 1/2$, that is, the optimal IVs are identical. It is unclear whether or not GSI provides a strict gain in IV for $1/2\le \rho < 1$.

\begin{remark}
Our proposed bit-separation scheme may share a similar flavor with a scheme for a tandem of erasure channels mentioned by Rajagopalan and Schulman in the introduction of their paper \cite{Rajagopalan1994}, and they attributed it to Nisan and Parnas based on the personal communication with N. Nisan \cite[{Ref. [15]}]{Rajagopalan1994}. However, to the best of our knowledge, this ``Nisan-Parnas scheme'' has never been introduced formally and analyzed rigorously in the literature, let alone the investigation on the achievable information velocity. It is even unclear whether or not the scheme requires feedback, and whether or not it is the same as Domanovitz \textit{et.al}'s scheme \cite{DomanovitzPhilosof_22}. In fact, the work by Nisan and Parnas \cite[{Ref. [15]}]{Rajagopalan1994} has not been mentioned in the literatures on information velocity.
\end{remark}



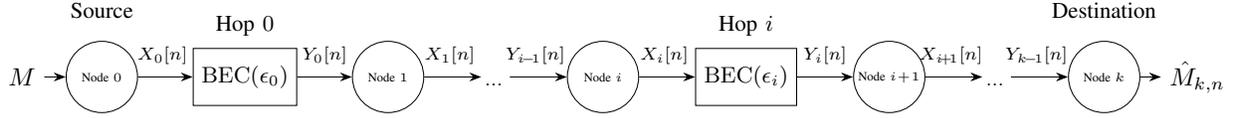
\begin{figure*}[!t]
  \centering
  \resizebox{\textwidth}{!}{%
\begin{tikzpicture}[node distance=18mm]

\node[nodec] (v0) {Node $0$};
\node[nodec, right=30mm of v0] (v1) {Node $1$};
\node[nodec, right=20mm of v1] (vi) {Node $i$};
\node[nodec, right=30mm of vi] (vip1) {Node $i\!+\!1$};
\node[nodec, right=20mm of vip1] (vK) {Node $k$};

\node[left=3mm of v0](Theta) {$M$};
\node at ([xshift=8mm] vK.east) (Thetahat){$\hat{M}_{k,n}$};

\node at ([xshift=10mm,yshift=-1mm] v1.east) {$\scriptstyle\dots$};
\node at ([xshift=10mm,yshift=-1mm] vip1.east) {$\scriptstyle\dots$};

\node[chblock] (c0) at ($(v0)!0.5!(v1)$) {$\mathrm{BEC}(\epsilon_0)$};

\node[chblock] (ci) at ($(vi)!0.5!(vip1)$) {$ \mathrm{BEC}(\epsilon_i)$};

\draw[->] (Theta) -- (v0.west);
\draw[->] (v0) -- (c0.west) node[pos=0.45] (v0c0mid) {};
\draw[->] (c0.east) -- (v1)
node[pos=0.5] (c0v1mid) {};
\draw[->] (v1) -- ([xshift=8mm] v1.east)
node[pos=0.45] (v1c1mid) {};
\draw[->] ([xshift=-8mm] vi.west) -- (vi)
node[pos=0.45] (cim1vimid) {};
\draw[->] (vi) -- (ci.west)
node[pos=0.45] (vicimid) {};
\draw[->] (ci.east) -- (vip1)
node[pos=0.5] (civip1mid) {};
\draw[->] (vip1) -- ([xshift=8mm] vip1.east)
node[pos=0.45] (vip1cip1mid) {};
\draw[->] ([xshift=-8mm] vK.west) -- (vK)
node[pos=0.45] (cKm1vKmid) {};
\draw[->] (vK.east) -- (Thetahat.west);

\node at ([yshift=3mm] v0c0mid) {$\scriptstyle X_0[n]$};
\node at ([yshift=3mm] c0v1mid) {$\scriptstyle Y_0[n]$};
\node at ([yshift=3mm] v1c1mid) {$\scriptstyle X_1[n]$};
\node at ([yshift=3mm] cim1vimid) {$\scriptstyle Y_{i\!-\!1}[n]$};
\node at ([yshift=3mm] vicimid) {$\scriptstyle X_i[n]$};
\node at ([yshift=3mm] civip1mid) {$\scriptstyle Y_i[n]$};
\node at ([yshift=3mm] vip1cip1mid) {$\scriptstyle X_{i\!+\!1}[n]$};
\node at ([yshift=3mm] cKm1vKmid) {$\scriptstyle Y_{k\!-\!1}[n]$};

\node[above=2mm of v0, font=\small] {Source};
\node[above=2mm of vK, font=\small] {Destination};
\node[above=0.5mm of c0, font=\small] {Hop $0$};
\node[above=0.5mm of ci, font=\small] {Hop $i$};

\end{tikzpicture}
}
\caption{A tandem network of BEC links}
  \label{fig:iv_system}
\end{figure*}

%% file: arXiv_v1/sec_formulation.tex
We are asked to communicate a message $M$ uniformly chosen in $\lbp 0,1,\ldots, 2^{m}-1\rbp$ over a tandem network. 
The tandem network consists of $k+1$ nodes, indexed by $\lbp 0,1,\ldots,k\rbp$, and they are connected in tandem by $k$ independent binary erasure channels (BEC's). The $i$-th BEC connects node $i$ and node $i+1$, for $i=0,1,\ldots,k-1$, and its channel law $\W_i(Y_i|X_i)$ (with  channel input $X_i \in \{0,1\}$ and channel output $Y_i \in \{0,1,\ast\}$) is given as follows: 
\begin{equation}
\W_i(y_i|x_i) =
\begin{cases}
1-\epsilon_i, &\text{if $y_i = x_i$}\\
\epsilon_i, &\text{if $y_i = \ast$}.
\end{cases}\ .
\end{equation}
In words, the \emph{erasure probability} of the $i$-th BEC is $\epsilon_i$, and it is assumed that $0<\epsilon_i<1$, $i=0,1,\ldots, k-1$.

The message $M$ is available only at the encoder of the source node (node $0$). 
At each time instance $n \in \mbb{N}$, every node $i\in\lbp 0,1,\ldots,k-1\rbp$ transmits $X_{i}[n]$ to node $i+1$ through the $i$-th BEC $\W_i$, producing an output $Y_i[n]$ observed at node $i+1$. 
For a network consisting of BEC links, one key piece of information at each time instance $n$ is the connectivity of each link, which we defined as the \emph{channel state} in the following:
\begin{equation}
S_i[n] \eqDef \Indc{Y_i[n] \ne \ast} 
=
\begin{cases}
1, &\text{if $Y_i[n] = 0,1$}\\
0, &\text{if $Y_i[n] = \ast$}
\end{cases}\ ,
\end{equation}
indicating whether or not the transmission over the BEC link from node $i$ to node $i+1$ is successful at time $n$.

The transmitted symbol $X_{i}[n]$ from node $i$ at time $n$ is a function of the available information at node $i$ up to time $n-1$. 
We consider setups in two extremes regarding the available information at each node (we use $A \eqFunc B$ to denote ``$A$ is a function of $B$'' and $a[i:j]$ to denote $a_i,a_{i+1},\ldots,a_j$):
\begin{enumerate}
\item Local information at the receiver: $X_i[n]$ only depends on received symbols (and hence the state information of the $(i-1)$-th BEC link) up to time $n-1$, that is,
\begin{equation}
X_0[n] \eqFunc M;\quad
X_i[n] \eqFunc Y_{i-1}[1:n-1],\ i\ge 1.
\end{equation}
\item Global state information at the receiver: $X_i[n]$ depends on received symbols \emph{and} the state information of the \emph{entire network} up to time $n-1$, that is,
\begin{align}
X_0[n] &\eqFunc \big(M, \bm{S}[1:n-1] \big);\\
X_i[n] &\eqFunc \big(Y_{i-1}[1:n-1],  \bm{S}[1:n-1] \big),\ i\ge 1,
\end{align}
where $\bm{S} \eqDef (S_0,S_1,\ldots, S_{k-1})$.

\end{enumerate}
\begin{remark}
We would like to highlight that the relays in the setup with global state information have access to extermely more information about the channels states of the chain than the setting with 1-hop feedback only, where relay $i$ only knows $Y_{i-1}[1:n-1]$ and $S_i[1:n-1]$.
\end{remark}

After $t$ time instances, the destination node (node $k$) produces a decoded output $\hat{M}_{k,t}$. 
As for the performance, we are mainly concerned about the amount of time $t$ needed for the message $M$ to be reliably decoded at the destination node. Let $\xi_k\in (0,1)$ denote the target level of error probability. Let us define the minimum delay as
\begin{align}
\notag&t^*_{m,k}\big(\{\epsilon_i\}_{i=0}^{k-1},\xi_k\big)\\
&\eqDef \inf\lbp t \,\mv\, 
\exists\ \text{encoding/decoding functions}\  
\text{such that}\ \Pr\{\hat{M}_{k,t}\neq M\}<\xi_k
\rbp.
\end{align}



We are interested in the following fundamental question: 
what is the asymptotic limit of the ratio $k/t^*_{m,k}\big(\{\epsilon_i\}_{i=0}^{k-1},\xi_k\big)$,
termed \emph{information velocity}, for reliably disseminating the message over the network as the size of the network $k\ra\infty$? 

Since the minimum delay $t^*_{m,k}\big(\{\epsilon_i\}_{i=0}^{k-1},\xi_k\big)$ also depends on the message size $m$, 
so does the limit of $k/t^*_{m,k}\big(\{\epsilon_i\}_{i=0}^{k-1},\xi_k\big)$ as $k\ra\infty$. 
It turns out that the scaling of the message size with $k$ is particularly interesting. We consider three different regimes:
\begin{enumerate}
\item Constant regime: $m = \Theta(1)$ and does not scale with $k$.
\item Polynomial regime $\poly{\rho}$: $m = \Theta(k^{\rho}),\ \rho \in (0,1)$.
\item Linear regime $\linear{\alpha}$: $m = \alpha k+ o(k),\ \alpha > 0$.
\end{enumerate}

We are now ready to formally define information velocity.
\begin{definition}[Information Velocity]\label{def:IV}
The information velocity over the tandem of independent BEC links with erasure probabilities $\{\epsilon_i \,\vert\, i = 0,1,\ldots\}$ and decoding error probability criteria $\{\xi_i \,\vert\, i = 1,2,\ldots\}$ is defined as
\begin{equation}
v_{\triangle}^{\square}\big( \{\epsilon_i\},\{\xi_i\}\big) \eqDef \liminf_{k\ra\infty} \frac{k}{t^*_{m,k}\big(\{\epsilon_i\}_{i=0}^{k-1},\xi_k\big)},
\end{equation}
where the subscript $\triangle \in \{m,\poly{\rho},\linear{\alpha}\}$ denotes the scaling of the message size with respect to $k$, and the superscript $\square = \gsi$ if global state information is available at all receivers. If only local information is available at all receivers, the superscript $\square$ is simply omitted. Besides, we use a single parameter $\epsilon$ to replace $\{\epsilon_i\}$ when every $\epsilon_i = \epsilon$.
\end{definition}

In the rest of this paper, we shall make the following assumption about the erasure probabilities.
\begin{assumption}\label{assumption:erasure_prob}
$\zeta( \{\epsilon_i\}) \equiv \lim_{k\ra\infty} \frac{1}{k}\sum_{i=0}^{k-1} \frac{1}{1-\epsilon_i}$
exists for the sequence of erasure probabilities $\{\epsilon_i \,\vert\, i = 0,1,2,\ldots\}$. 
Moreover, $\{\epsilon_i\}$ are strictly bounded away from $1$.
\end{assumption}

Note that for homogeneous link erasure probabilities, that is, every $\epsilon_i = \epsilon$, Assumption~\ref{assumption:erasure_prob} holds with $\zeta( \{\epsilon_i\}) = \frac{1}{1-\epsilon}$.

%% file: arXiv_v1/sec_results.tex
Our main contribution is the characterization of the information velocity when receivers only have local information. 

\begin{theorem}[IV without GSI]\label{thm:main}
Suppose that Assumption~\ref{assumption:erasure_prob} holds. 
For the constant message size regime, the information velocity is given as: $\forall\,m\in\mbb{N}$, 
\begin{equation}
v_{m}\lp \lbp\epsilon_i\rbp, \lbp\xi_i\rbp\rp = 1/\zeta( \{\epsilon_i\})
\end{equation}
for the sequence of error probability criteria $\{\xi_i \,\vert\, i = 1,2,\ldots\}$ converging to $0$ from above that satisfy 
\begin{equation}\label{eq:error_criteria_scaling_1}\textstyle
\lim_{k\ra\infty} \tfrac{1}{k} \log(1/ \xi_k) = 0,\ \text{that is, $\xi_k = e^{-o(k)}$.}
\end{equation}
For the polynomial message size regime, the information velocity is given as: $\forall\,\rho\in(0,1/2)$,
\begin{equation}
v_{\poly{\rho}}\lp \lbp\epsilon_i\rbp, \lbp\xi_i\rbp\rp = 1/\zeta( \{\epsilon_i\})
\end{equation}
for the sequence of error probability criteria $\{\xi_i \,\vert\, i = 1,2,\ldots\}$ converging to $0$ from above that satisfy 
\begin{equation}\textstyle
\lim_{k\ra\infty} \tfrac{1}{k^{1-2\rho}} \log(1/ \xi_k) = 0,\ \text{that is, $\xi_k = e^{-o(k^{1-2\rho})}$.}
\end{equation}
%
\end{theorem}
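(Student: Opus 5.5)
The proof splits into a converse and an achievability part. Since $\poly{\rho}$ with $\rho\in(0,1/2)$ contains the constant regime as a special case of the analysis, we can treat $m=O(k^{\rho})$ uniformly.

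\emph{Converse.} It suffices to show $\limsup_k k/t^*_{m,k}\le 1/\zeta(\{\epsilon_i\})$ even for $m=1$, because more bits can only increase the delay. We also allow global state information and feedback, which can only help, so the bound also holds without them. Under this relaxation, information can cross hop $i$ only at a time when $S_i=1$ and node $i$ already carries information about $M$. Define $T_i$ as the first time information reaches node $i$. Then $T_{i+1}-T_i\ge G_i$, where the $G_i$ are independent geometric random variables with means $1/(1-\epsilon_i)$. If $\sum_{i<k}G_i>t$, the destination's output is independent of $M$, so the error probability is at least $1/2$. Hence
\begin{equation*}
\xi_k\ge \tfrac12\Pr\Bigl\{\textstyle\sum_{i<k}G_i>t\Bigr\}.
\end{equation*}
Take $t=k(\zeta-\delta)$. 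A Chernoff bound, together with the assumption that the $\epsilon_i$ are bounded away from $1$, shows this probability tends to $1$, and certainly is not $e^{-\omega(k)}$-small. This contradicts $\xi_k\to0$. (For the converse we only need $\xi_k\to 0$.)

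\emph{Achievability via bit separation.} Each node relays bits in order, one bit per ``flow''. The source sends bit $j$ repeatedly during a window of length $L$, namely $[(j-1)L,\,jL)$, with a spacing $L$ to be chosen. Node $i$ decodes each received symbol by its arrival time. Concretely, it maintains the bit index it is currently forwarding and forwards the last received value. Since the alphabet is binary and bits carry no labels, we introduce a separator or parity convention: the source transmits bit $j$ in the slot pattern dictated by a known schedule, so that a relay can attribute a received symbol to a flow by timing. The flow of bit $j$ is a ``forward-the-last-received'' wave whose front at hop $i$ arrives at time $(j-1)L+\sum_{l<i}G_l^{(j)}$. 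For bit $j$ to arrive intact, its front must not overtake the tail of bit $j-1$, and bit $j-1$'s front must not be overtaken either. The first-passage times across $k$ hops have mean $k\zeta$ and standard deviation $O(\sqrt k)$. Two consecutive flows differ in their random passage times by a sub-Gaussian/sub-exponential deviation of order $\sqrt k$, uniformly over all partial sums. Using a maximal (Kolmogorov/Doob) Chernoff bound, we choose $L=k^{1/2+\eta}$ so that the collision probability per pair is $\exp(-\Omega(L^2/k))$. A union bound over the $m$ pairs then gives total error $m\exp(-c\,L^2/k)$. The overall delay is
\begin{equation*}
t=k(\zeta+\delta)+mL=k\zeta+o(k),
\end{equation*}
which holds provided $mL=o(k)$, i.e. $k^{\rho+1/2+\eta}=o(k)$. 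This is exactly where $\rho<1/2$ is needed. The error target is matched by choosing $L$ so that $L^2/k$ dominates $\log(m/\xi_k)$, i.e. $L\approx\sqrt{k\log(1/\xi_k)}\cdot\omega(1)$. Then $mL=o(k)$ is equivalent to $\log(1/\xi_k)=o(k^{1-2\rho})$, which gives precisely the stated condition. For constant $m$ this reduces to $\xi_k=e^{-o(k)}$.

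\emph{Main obstacle.} The hardest step is making the collision event precise. We must define how a relay knows which bit it is forwarding without feedback, and then bound the collision probability via a maximal inequality over all hops of the difference of two independent (heterogeneous) geometric random walks. The difficulty is that the walks are not identically distributed under the heterogeneous $\epsilon_i$, and that the holding times of flow $j$ at node $i$ depend on the schedule of flow $j-1$. I would first handle the homogeneous case with the dominating random walk. I would then use the Assumption (bounded $\epsilon_i$) to get uniform sub-exponential constants for the Bernstein bound.
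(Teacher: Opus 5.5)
Your argument is sound overall: the converse takes a genuinely different route from the paper, and the achievability is the paper's scheme, but built around an obstacle that does not exist.

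\emph{Converse.} The paper (Appendix~\ref{sec:conv}) works with mutual information:
\begin{itemize}
\item Lemma~\ref{lem: conv_1hopt1timegap} shows that the one-slot growth of $\mathrm{I}(M;Y_i[1:n],\bm{S}^n)$ is at most a $(1-\epsilon_i)$-fraction of the gap to the previous hop.
\item The resulting recursion $g(i,n)$ is identified with $\Pr\{\sum_{\ell<i}G_\ell<i+n\}$.
\item A CLT plus Fano's inequality, applied to one bit and then extended by monotonicity in $m$, finishes the argument.
\end{itemize}
You use causality directly. Take $T_0=0$ and let $T_{i+1}$ be the first unerased slot of hop $i$ after $T_i$. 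Induction on $i$ shows that on $\{T_k>t\}$ everything the destination sees, GSI included, is a fixed function of $\bm{S}$. Since $M$ and $\bm{S}$ are independent, $\Pr\{\hat M\neq M\}\ge(1-2^{-m})\Pr\{T_k>t\}$, where $T_k$ is exactly a sum of independent $\mathrm{Geom}(1-\epsilon_i)$ variables. Chebyshev, with variances bounded by Assumption~\ref{assumption:erasure_prob}, then finishes at $t=k(\zeta-\delta)$. Your route is shorter, handles every $m$ at once, and avoids Fano. The paper's information-flow recursion is the version that would survive on channels where ``information has not yet arrived'' is not a clean event. In the write-up, state the induction hypothesis explicitly ($X_i[n]$ is a function of $\bm{S}[1:n-1]$ alone for $n\le T_i$), since it carries the whole proof.

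\emph{Achievability.} This is the paper's bit-separation scheme, but the ``main obstacle'' you name is not there.
\begin{itemize}
\item No separator, parity convention or flow attribution at relays is needed. Every relay forwards its last non-erased bit. The index of the bit held at each node is then automatically nondecreasing in time and nonincreasing along the tandem. The destination just reads its last non-erased bit at $\tau_j$.
\item The front of indices $\ge j$ advances one hop in a slot exactly when the hop at the front is unerased. So it is the paper's walk $I_{j,n}$ started at $j\lsepk$, and its law does not depend on the other flows. In particular, flow $j$'s holding times do not depend on flow $j-1$.
\item Consecutive fronts are \emph{not} independent, since they are driven by the same channel states. So do not bound the difference of two independent walks. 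As in Appendix~\ref{app:omitted_proofs}, keep each front within (roughly) half the spacing of its own mean, then conclude by the triangle inequality and a union bound over $j$. The joint law never enters.
\item Your hop-indexed maximal Bernstein bound on partial sums of independent geometrics is a valid substitute for the paper's time-indexed martingale $T(I_{j,n})-(n-j\lsepk)$ with Azuma in the heterogeneous case.
\item The decoding instants must be centered at the exact $\sum_{i<k}\frac{1}{1-\epsilon_i}$, as in \eqref{eq: tau_het}, not at $k\zeta$. The discrepancy is only $o(k)$ and may exceed the spacing.
\end{itemize}
One point in your favour: tying the spacing to $\xi_k$ via $L^2/k\gg\log(m/\xi_k)$ is more careful than the paper's fixed $\lsepk=ck^{0.5+\dsep}$. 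The paper's bound $\exp(-\Theta(k^{2\dsep}))$, with $2\dsep<1-2\rho$, does not beat every admissible $\xi_k$ (e.g.\ $\xi_k=\exp(-k^{1-2\rho}/\log k)$). Your choice does, as does simply taking $L$ proportional to the velocity slack $\delta k/m$.
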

\begin{IEEEproof}[Proof sketch]
The converse is shown by extending arguments in \cite{Inovan2024} (see Theorem~\ref{thm: conv_hetero_fb} below). The achievability is proved by detailed analysis of our proposed scheme, where multiple message bits are pipelined with judiciously designed temporal spacing so that each bit can be disseminated by a simple forward-the-last-received relaying method without interfering with one another. See Section~\ref{sec:achi} for details.
\end{IEEEproof}


Note that in Theorem~\ref{thm:main}, the characterization of IV is only for the regimes of the constant message size and polynomial message size up to $\rho < 1/2$. The reason is that our achievability is based on pipelining message bits over the tandem of BEC links with sufficient spacing, and to ensure that the additional time for all bits to arrive reliably at the final destination is sub-linear in $k$, the scaling of the message size has to be $o(\sqrt{k})$.  

The temporal spacing between consecutive message bits is to ensure that all relay nodes know the status of the network so that they know what bits arrive and what bits to forward. Hence, if global state information (GSI) is available, the spacing can be made much smaller, and as a result, strictly positive IV can be achieved for $\rho \in [1/2,1)$ and even for the linear message size regime. 

\begin{theorem}[IV with GSI]\label{thm: gsi}
For the constant and the polynomial message size regimes, the information velocity is given as: 
$\forall\, m\in\mbb{N}$ and $\forall\,\rho\in(0,1)$, 
\begin{equation}
v^{\gsi}_{m}\lp \epsilon, \lbp\xi_i\rbp\rp 
= v^{\gsi}_{\poly{\rho}}\lp \epsilon, \lbp\xi_i\rbp\rp
= 1-\epsilon.
\end{equation}
for the sequence of error probability criteria $\{\xi_i \,\vert\, i = 1,2,\ldots\}$ converging to $0$ from above that satisfy \eqref{eq:error_criteria_scaling_1}. 
For the linear message size regime, the information velocity is: $\forall\,\alpha>0$, 
\begin{equation}
v^{\gsi}_{\linear{\alpha}}\lp \epsilon, \lbp\xi_i\rbp\rp 
\ge \tfrac{1}{1 + 2\sqrt{\alpha\epsilon}+\alpha}(1-\epsilon).
\end{equation}
for the sequence of error probability criteria $\{\xi_i \,\vert\, i = 1,2,\ldots\}$ converging to $0$ from above that satisfy \eqref{eq:error_criteria_scaling_1}. 
\end{theorem}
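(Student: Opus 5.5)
The plan has two halves: a converse and two achievability schemes.

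\emph{Converse.} For the constant and polynomial regimes we need $v^{\gsi}\le 1-\epsilon$. I would reduce to the single-bit, genie-aided setting. Even with global state information, node $k$ can hold information about $M$ at time $t$ only if some causal path exists, that is, the first passage of ``information'' through the chain has completed. Let $T_i$ be the first time hop $i$ is unerased after the information has reached node $i$. Then $T_k$ is a sum of $k$ i.i.d.\ Geometric$(1-\epsilon)$ variables, whatever the coding. If $t\le k/(1-\epsilon)\,(1-\delta)$, a Chernoff bound gives $\Pr\{T_k>t\}\ge 1-e^{-ck}$, and on that event $\hat M_{k,t}$ is independent of $M$. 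The error is therefore at least roughly $(1-2^{-m})(1-e^{-ck})$, which violates $\xi_k\to0$. This is exactly the argument behind Theorem~\ref{thm: conv_hetero_fb} specialised to homogeneous $\epsilon_i$, and GSI does not affect it because the states are independent of $M$. So I would simply invoke that theorem.

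\emph{Achievability, constant and polynomial regimes.} I would use a GSI-enhanced version of bit-separation in the spirit of \cite{DomanovitzPhilosof_22}. Bits are queued FIFO at each node, and node $i$ transmits the oldest bit it holds that node $i+1$ does not yet have. It knows this because it sees $S_i$ and, through GSI, the full network state, so all nodes agree on the queue contents and no spacing is needed. The arrival time of bit $j$ at node $i$, call it $D(i,j)$, then satisfies the tandem-queue recursion
\[
D(i,j)=\max\{D(i-1,j),\,D(i,j-1)\}+G_{i,j},
\]
where the $G_{i,j}$ are i.i.d.\ Geometric$(1-\epsilon)$ service times. (Memorylessness lets me treat service times as fresh geometrics per (hop, bit) pair.) Consequently $D(k,m)$ is a last-passage percolation time over an $m\times k$ grid with geometric weights. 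By the Johansson/Sepp\"al\"ainen shape theorem \cite{Johansson2000,Seppaelaeinen1998}, for geometric weights with mean $1/(1-\epsilon)$,
\[
D(k,\alpha k)/k \to \frac{1+2\sqrt{\alpha\epsilon}+\alpha}{1-\epsilon}.
\]
For $m=o(k)$ this limit with $\alpha\to0$ gives $1/(1-\epsilon)$. The LPP value is also monotone in $m$, so it is dominated by $D(k,\delta k)$ for every fixed $\delta>0$. Taking $t=(1+\eta)k/(1-\epsilon)$ with $\delta$ chosen small relative to $\eta$, the rate is $1-\epsilon-o(1)$.

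\emph{Linear regime and error criteria.} For the linear regime I take $t=(1+\eta)\,k(1+2\sqrt{\alpha\epsilon}+\alpha)/(1-\epsilon)$. The remaining step is to turn convergence in probability into $\Pr\{D(k,m)>t\}<\xi_k$ with $\xi_k=e^{-o(k)}$. For this I would use the known upper-tail large deviations for geometric LPP (Johansson's rate function), under which $\Pr\{D>(1+\eta)\,\mathrm{mean}\}\le e^{-c(\eta)k}$. That bound is eventually below any $\xi_k=e^{-o(k)}$, and the decoder errs only if some bit has not arrived by time $t$.

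\emph{Main obstacle.} I expect the hardest step to be justifying the exact tandem-queue/LPP model. One has to show that under GSI the protocol is well defined, meaning every node knows precisely which bits its successor holds, and that its delays obey the max-plus recursion with independent geometric service times. Once this is in place, the LPP shape theorem and its large-deviation bound close the argument.
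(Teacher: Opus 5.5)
Your proposal is correct and follows essentially the paper's route. The converse is inherited from Theorem~\ref{thm: conv_hetero_fb}; your first-passage light-cone argument is a valid and more direct alternative to the paper's mutual-information recursion, though it is not literally how that theorem is proved. Achievability comes from the GSI-controlled FIFO tandem queue, whose completion time is a geometric last-passage-percolation time bounded through Johansson's upper-tail large deviations \cite{Johansson2000}. Your monotonicity-in-$m$ reduction to a fixed aspect ratio $\delta$ (and to $\alpha+\delta$ in the linear regime, to absorb the $o(k)$ term) is a cleaner justification than the paper's direct use of the fixed-$\gamma$ result with $\gamma=\Theta(k^{\rho-1})$, which that result does not literally cover.
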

\begin{IEEEproof}[Proof sketch]
The converse for the polynomial regime and the constant regime follows Theorem~\ref{thm: conv_hetero_fb}. For achievability, 
we leverage results on last-passage percolation (LPP) \cite{Seppaelaeinen1998, Johansson2000} to extend the analysis of achievable IV to $\poly{\rho}, \rho \ge \frac{1}{2}$ and $\linear{\alpha}, \alpha > 0$. 
See Section~\ref{sec:achi} for details.
\end{IEEEproof}

For the sake of completeness, converse results are given in the theorem below.
\begin{theorem}[Converse]\label{thm: conv_hetero_fb}
Suppose that Assumption~\ref{assumption:erasure_prob} holds. For any $\alpha>0$, $\rho\in(0,1)$, $m\in\mbb{N}$, 
and any $\lbp\xi_i\rbp$ converging to $0$ from above,
	\begin{equation*}
	v_\triangle\lp\lbp \epsilon_i\rbp, \lbp\xi_i\rbp\rp\leq v_\triangle^{\gsi}\lp\lbp \epsilon_i\rbp, \lbp\xi_i\rbp\rp
	\end{equation*}
	for $\triangle\in\lbp 1,m,\poly{\rho},\linear{\alpha}\rbp$. Furthermore,
	\begin{align*}
		&v^{\gsi}_{\linear{\alpha}}\lp\lbp \epsilon_i\rbp, \lbp\xi_i\rbp\rp\\
		&\leq v^{\gsi}_{\poly{\rho}}\lp\lbp \epsilon_i\rbp, \lbp\xi_i\rbp\rp
		\leq v^{\gsi}_m\lp\lbp \epsilon_i\rbp, \lbp\xi_i\rbp\rp \\
		&\leq v^{\gsi}_1\lp\lbp \epsilon_i\rbp, \lbp\xi_i\rbp\rp
		\leq \tfrac{1}{\zeta( \{\epsilon_i\})}.
	\end{align*}
\end{theorem}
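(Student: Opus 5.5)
The chain of inequalities splits into three kinds of steps: the no-GSI versus GSI comparison, the monotonicity in message size, and the single-bit bound $v^{\gsi}_1\le 1/\zeta(\{\epsilon_i\})$. Only the last one needs real work.

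\emph{Step 1: $v_\triangle \le v_\triangle^{\gsi}$.} Any scheme using only local information is also a valid scheme when GSI is available, because the encoders may simply ignore $\bm{S}[1:n-1]$. Hence $t^{*}$ with GSI is at most $t^{*}$ without GSI for every $k$, and taking $\liminf$ of $k/t^*$ gives the inequality.

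\emph{Step 2: monotonicity in the message size.} A scheme that reliably delivers $m'$ bits also delivers any $m\le m'$ bits within the same time and with no larger error probability. The source pads the shorter message with fixed (or uniformly random, independent) bits, and the destination outputs the corresponding prefix. The prefix is wrong only if the full estimate is wrong, so $\Pr\{\hat M\neq M\}$ does not increase. So $t^*_{m,k}$ is nondecreasing in $m$, and for all large $k$ we have $1\le m\le \Theta(k^\rho)\le \alpha k+o(k)$ (using $\rho<1$). This yields the chain $v^{\gsi}_{\linear{\alpha}}\le v^{\gsi}_{\poly{\rho}}\le v^{\gsi}_m\le v^{\gsi}_1$. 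The same inequalities hold termwise in $k$, so they survive the $\liminf$.

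\emph{Step 3: $v^{\gsi}_1\le 1/\zeta$.} This is the main step, and I would prove it by extending the argument of \cite{Inovan2024}. Define $T_i$ recursively as the first time after $T_{i-1}$ at which link $i-1$ is unerased, i.e. $T_i=\min\{n>T_{i-1}: S_{i-1}[n]=1\}$ with $T_0=0$. I would then show that at any time $t<T_k$, the output $Y_{k-1}[1:t]$ together with $\bm{S}[1:t]$ is independent of $M$.

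The intended argument is an induction on $i$: for $n<T_i$, the pair $\big(Y_{i-1}[1:n],\bm{S}[1:n]\big)$ is conditionally independent of $M$ given the state process. The states are independent of $M$, and any dependence on $M$ must pass through an unerased symbol on each hop in causal order. The subtle point is that GSI reveals the states globally but never reveals message-dependent symbols, so the induction still goes through; this needs careful bookkeeping of the strictly causal functions. Consequently, on the event $\{T_k>t\}$ the destination guesses a bit independent of $M$, and its error is at least $1/2$. This gives $\xi_k\ge \tfrac12\Pr\{T_k>t\}$.

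Next, $T_k=\sum_i G_i$ with independent $G_i\sim\mathrm{Geom}(1-\epsilon_i)$, so $\E[T_k]=\sum_i 1/(1-\epsilon_i)=k\zeta+o(k)$. Suppose $t\le k(\zeta-\delta)$. The $\epsilon_i$ are bounded away from $1$, so the geometric variables have uniformly bounded variance. Chebyshev (or a Chernoff bound) then gives $\Pr\{T_k>t\}\to 1$, which contradicts $\xi_k\to0$. Hence $t^*_{1,k}\ge k(\zeta-\delta)$ for large $k$, so $v^{\gsi}_1\le 1/\zeta$.

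I expect the main obstacle to be the rigorous independence claim under GSI, which is where the induction over hops must be handled with care.
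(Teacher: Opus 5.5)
Your proposal is correct. Steps 1 and 2 are what the paper dismisses as ``straightforward by definition,'' but Step 3 takes a genuinely different route.

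The paper argues with mutual information. Lemma~\ref{lem: conv_1hopt1timegap} bounds the one-step growth of $\MI{M}{Y_i[1:n],\bm{S}^n}$ by $(1-\epsilon_i)$ times the information gap to the preceding hop. The resulting recursion is dominated by $g(i,n)=\Pr\{\sum_{\ell<i}G_\ell<i+n\}$, a CLT sends $g\to 0$ above velocity $1/\zeta$, and Fano's inequality (with $m=1$) drives the error to $1/2$.

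You argue pathwise instead, and the step you flagged as the main obstacle is a short induction. On the $\bm{S}[1:n]$-measurable event $\{T_i>n\}$, $Y_{i-1}[1:n]$ is a deterministic function of $\bm{S}[1:n]$. The reason is that any unerased slot $n'\le n<T_i$ on link $i-1$ must have $n'\le T_{i-1}$. So $X_{i-1}[n']$ depends only on $(Y_{i-2}[1:n'-1],\bm{S}[1:n'-1])$ with $n'-1<T_{i-1}$, and the induction hypothesis applies. GSI costs nothing here, since $\bm{S}$ is independent of $M$. You should also state why the increments $T_i-T_{i-1}$ are independent $\mathrm{Geom}(1-\epsilon_{i-1})$: $T_{i-1}$ is a stopping time of links $0,\dots,i-2$ only, and link $i-1$ is independent of them and memoryless.

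The two routes control the same quantity; in your notation, $g(i,n)=\Pr\{T_i\le i+n-1\}$. Yours is more elementary:
\begin{itemize}
\item it gives the explicit bound $\Pr\{\hat M\neq M\}\ge(1-2^{-m})\Pr\{T_k>t\}$ for every $m$;
\item it replaces the CLT by Chebyshev with uniformly bounded variances;
\item it avoids Fano entirely.
\end{itemize}
The paper's template, inherited from Inovan, tracks how much information has reached a relay rather than whether any has. That form is more likely to extend to channels where the arrival of information is not an all-or-nothing, state-determined event.

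One small repair is needed in Step 2. Encoders in the model are deterministic and the error is averaged over a uniform message, so an arbitrary fixed padding need not preserve the error guarantee. Random padding would need source randomness, which the model does not provide. Instead, average the error over uniform paddings and fix one padding whose conditional error is still below $\xi_k$; such a padding exists because the average is below $\xi_k$.
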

See Appendix~\ref{sec:conv} for the proof. 

%% file: arXiv_v1/sec_achievability.tex
In this section, we first describe our scheme with local information only, where multiple message bits are pipelined through the tandem of BEC links with sufficient separation in time. Each (uncoded) message bit is transmitted repeatedly from the source node for some time and disseminated over the tandem, where every relay employs a \emph{forward-the-last-received} scheme. The timing at which the source switches to the next bit should be designed judiciously so that different bits are well separated in time and do not overwrite one another. The analysis of this scheme is carried out to complete the achievability proof of Theorem~\ref{thm:main}. We then switch gears to introduce the enhanced scheme when global state information is available and prove the achievability part of Theorem~\ref{thm: gsi}.

To better convey the ideas, in this section we only present the proofs for the case of homogeneous erasure probabilities. The analysis for the general heterogeneous case can be found in Appendix~\ref{app:het_achieve_woGSI}.

Let us begin with the singe-bit \emph{forward-the-last-received} scheme, which serves as the basic building block of the schemes for multiple bits. 
The forward-the-last-received scheme consists of the following steps.
\begin{itemize}	
	\item The source (node $0$) keeps sending a single-bit message $M=b_0\in\{0,1\}$.
	\item Every relay sends the last non-erased bit that it receives. If there is none, it sends $0$.
	\item The destination (node $k$) declares $\hat{M}$ as the last received bit that is not erased at time $\tau = \frac{k}{1-\epsilon}+\delta$ with $\delta \in o(k)\cap\omega(\sqrt{k})$. 
\end{itemize}
For this simple scheme, the probability of error is upper bounded by the probability that $\tau$ is smaller than the total number of time instances for the single bit to reach the final destination (node $k$), which is the sum of $k$ i.i.d. $\mathrm{Geom}(1-\epsilon)$ random variables. 
Hence, the error probability vanishes as $k\ra\infty$ and an $(1-\epsilon)$ information velocity is achievable.


\subsection{Without GSI: the bit-separation scheme}\label{subsec: scheme}
Towards applying the forward-the-last-received scheme for multiple bits individually, the key is when the source should stop sending the current bit and move on to the next bit. The design of these timings is crucial. If the timings are too close, it may happen that the current  bit successfully arrives at a particular relay while the previous bit has not been forwarded, and consequently the previous bit is lost forever since the relay simply forwards the last bit that is not erased and has no idea the previous bit has never arrived. If the timings are too far way from one another, the overall delay will be lengthened, and the achieved IV will be affected. 

Let us get straight to how the bits are pipelined. Let $M = (b_0,b_1,\ldots,b_{m-1})$ denote the $m$-bit message.
\begin{itemize}
\item 
The source (node $0$) sequentially sends the bit string $b_0,\ldots,b_0,b_1,\ldots,b_1,\ldots,b_{m-1},\ldots,b_{m-1}$, where each bit is repeated for 
$\lsepk= ck^{0.5+\delta_{\mathrm{sep}}}$ 
times. Here $c>0$ and $\dsep>0$ are pre-specified constants. 
We use $\dsep$ to tune the amount of spacing. Later we will see that it is chosen according to the scaling of the message size with $k$. The choice of $c$ does not matter in the regimes of interest.
\item 
Every relay sends the last non-erased bit that it receives. If there is none, it sends $0$.
\item 
The destination (node $k$) declares $\hat{b}_j$ as the last received bit that is not erased at time 
\begin{equation}\label{eq:tau_j}
\tau_{j} = \tfrac{k}{1-\epsilon} + (j+\tfrac{1}{2})\lsepk,\quad j=0,1,\ldots,m-1.
\end{equation}
\end{itemize}




Let us now analyze the scheme for the case of homogeneous erasure probabilities. 
The idea is to 
characterize how far each $b_j$ has successfully reached 
by a Markov process $\{I_{j, n}\}$ starting from time $n=j\lsepk$ at value $0$, that is, $\Pr\{I_{j,j\lsepk}=0\}=1$, and the transition is given as follows: for $n> j\cdot\lsepk$,
\begin{equation}\label{eq:flow_transition}
\Pr\lbp I_{j,n} = i' \,\mv\, I_{j,n-1}=i\rbp
= 
\begin{cases}
1-\epsilon, &i'=i+1\\
\epsilon, &i'=i.
\end{cases}
\end{equation}
One can view $\{I_{j,n}\}$ as the ``wave front'' of the ``flow'' of $b_j$ at time $n$. 
Once for all $j$, the wave front of $b_j$ is not caught up by the next bit $b_{j+1}$ at any given time before the decoding time $\tau_j$ and that $b_j$ is the one read by the destination at time $\tau_j$, all bits can successfully reach the destination in order.
This sufficient condition can be written as: $\forall\, j=0,1,\ldots, m-1$, 
\begin{equation}\label{eq:success_suff_cond_1}
\begin{aligned}
&I_{j+1,n}<I_{j,n}\quad \forall\, n\ \text{with}\ j\lsepk < n < \tau_j\quad \text{and}\\
&I_{j+1,\tau_j}<k \le I_{j,\tau_j},
\end{aligned}
\end{equation} 
where the conditions $I_{m,n}<I_{m-1,n}$ and $I_{m,\tau_{m-1}}<k$ are neglected when $j=m-1$. Let $\mcal{A}_j$ denote the event of the above condition \eqref{eq:success_suff_cond_1}. 
The probability of success of the scheme can then be lower bounded by
$\Pr\lbp\bigcap_{j=0}^{m-1} \mcal{A}_j\rbp$. 

\begin{proposition}\label{prop: colli}
The probability of success of the bit-separation scheme is lowered bounded as
\begin{equation}
\textstyle
\Pr\lbp\bigcap_{j=0}^{m-1} \mcal{A}_j\rbp
\ge
1-2m\exp\lp - c' \frac{k^{2\dsep}}{1+\frac{1}{2}(1-\epsilon)ck^{\dsep-\frac{1}{2}}}\rp
\end{equation}
with $c' = (1-\epsilon)^3c^2/8$.
\end{proposition}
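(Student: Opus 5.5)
We bound the failure probability by a union bound over $j$, writing $\bigcup_j \mathcal{A}_j^c$ as the union of two kinds of bad events per bit, which gives the factor $2m$. Throughout, $I_{j,n}$ and $I_{j+1,n}$ are defined through the same link states. Each flow advances at time $n$ exactly when the link at its current wave-front position is not erased. So as long as $I_{j+1,n}<I_{j,n}$, the two fronts sit on distinct links and their increments are driven by independent Bernoulli$(1-\epsilon)$ variables. Each front is then marginally a binomial counting process: $I_{j,n}\sim\mathrm{Bin}(n-j\lsepk,1-\epsilon)$ while the fronts stay separated.

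For each $j$ we split $\mathcal{A}_j^c$ into two events. The first is that $b_j$ is too slow: $I_{j,\tau_j}<k$, i.e.\ fewer than $k$ successes in $\tau_j-j\lsepk=\frac{k}{1-\epsilon}+\frac12\lsepk$ trials. The second is that $b_{j+1}$ is too fast: for some $n\le\tau_j$ it catches up with $b_j$, or it reaches $k$ by time $\tau_j$. It is cleaner to sandwich both fronts around the deterministic line $(1-\epsilon)(n-j\lsepk)$. We require $I_{j,n}>(1-\epsilon)(n-j\lsepk)-\frac12(1-\epsilon)\lsepk$, and $I_{j+1,n}<(1-\epsilon)(n-(j+1)\lsepk)+\frac12(1-\epsilon)\lsepk$, for all $n\le\tau_j$. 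Together these give $I_{j+1,n}<I_{j,n}$. Evaluated at $\tau_j$, the first gives $I_{j,\tau_j}>k$ and the second gives $I_{j+1,\tau_j}<k$. The reflecting boundary at $0$ for a front that has not started is harmless.

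Each sandwich event is a maximal deviation, by $\Delta=\frac12(1-\epsilon)\lsepk$, of a centered random walk over at most $N=\frac{k}{1-\epsilon}+\frac12\lsepk$ steps. We will apply a maximal Bernstein inequality (via Doob's inequality applied to the exponential martingale), with per-step variance $\epsilon(1-\epsilon)\le1$ and increments bounded by $1$. This gives $\exp\bigl(-\Delta^2/(2(N\sigma^2+\Delta/3))\bigr)$, or, more crudely, $\exp\bigl(-\Delta^2/(2(N+\Delta))\bigr)$. Substituting $\Delta^2=\frac14(1-\epsilon)^2c^2k^{1+2\dsep}$ and $N+\Delta\le\frac{k}{1-\epsilon}\bigl(1+\frac12(1-\epsilon)ck^{\dsep-1/2}\bigr)\cdot$const, then dividing numerator and denominator by $k$, produces the claimed form $\exp\bigl(-c'k^{2\dsep}/(1+\frac12(1-\epsilon)ck^{\dsep-1/2})\bigr)$. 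The constant $c'=(1-\epsilon)^3c^2/8$ comes from the factor $(1-\epsilon)$ in $N$ and the $\frac14$ and $\frac12$ factors. Combining the two bad events per $j$ with the union bound over $j$ gives $1-2m\exp(\cdot)$.

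The main obstacle is justifying the independence and binomial structure once the two fronts interact. The process $I_{j+1}$ is defined via the same erasures as $I_{j}$, and if they coincide they share a link. We handle this by a coupling argument: on the event that no collision has yet occurred, both fronts coincide with independent free walks driven by fresh link states. Each bad event is then contained in a deviation event for a free walk, so working on the free walks is legitimate. A secondary difficulty is tuning constants so that exactly $c'$ appears. If the Bernstein constants do not match, we will use the crude variance bound $\sigma^2\le1-\epsilon$ rather than $\epsilon(1-\epsilon)$.
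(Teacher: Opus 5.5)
Your proposal is correct and is essentially the paper's argument. Like the paper, you confine each front to within $\frac12(1-\epsilon)\lsepk$ of the line $(1-\epsilon)(n-j\lsepk)$; your lower constraint on $I_{j}$ and upper constraint on $I_{j+1}$ are just one-sided pieces of the paper's two-sided escape events $\mathcal{E}_j$ and $\mathcal{E}_{j+1}$. You then apply a maximal martingale inequality over $N=\frac{k}{1-\epsilon}+\frac12\lsepk$ steps and take a union bound. Your coupling remark also makes explicit the link between the free walks and the real fronts, which the paper leaves implicit.

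One point to tighten is the constant. You get exactly $c'$ only with exponent denominator $2N$. The paper gets this from maximal Azuma--Hoeffding with increments bounded by $1$. Your sharp Bernstein form also yields it, since $N\epsilon(1-\epsilon)+\Delta/3\le N$. The crude $N+\Delta$ version, however, puts $\frac12(1-\epsilon)(2-\epsilon)c\,k^{\dsep-1/2}$ in place of $\frac12(1-\epsilon)c\,k^{\dsep-1/2}$ in the denominator, which is weaker than the statement. Your proposed fallback $\sigma^2\le 1-\epsilon$ goes the wrong way: enlarging the variance can only weaken the bound.
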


\begin{IEEEproof}[Proof Sketch]
For the event $\bigcap_{j=0}^{m-1}\mcal{A}_j$, instead of directly analyzing its probability, we shall derive a simpler sufficient condition, which says that the wave front of each $\{I_{j,n}\}$ is close enough to its expected value. As long as the expected values are sufficiently separated, $\{I_{j,n}\}$'s do not collide with one another, and hence this is a sufficient condition. Moreover, it is of high probability since the concentration behavior of these processes are guaranteed by martingale concentration. 
Due to space limitation, the details are left in Appendix~\ref{app:omitted_proofs}.
\end{IEEEproof}

As for the achieved IV, note that the overall delay is the decoding time of bit $b_{m-1}$, which is 
\begin{equation}
\tau_{m-1} 
= \tfrac{k}{1-\epsilon} + (m-\tfrac{1}{2})\lsepk
= \tfrac{k}{1-\epsilon} + (m-\tfrac{1}{2})ck^{0.5+\delta_{\mathrm{sep}}}.
\end{equation}
For the constant regime $m=\Theta(1)$, we shall choose $\dsep < 0.5$ so that $\tau_{m-1} = \tfrac{k}{1-\epsilon} + o(k)$, and as a result, the target velocity $1-\epsilon$ is achieved with error probability being $e^{-o(k)}$ by Proposition~\ref{prop: colli}. Similarly, for the polynomial regime $m=\Theta(k^\rho)$, if $0<\rho<1/2$, we can choose $\dsep < 0.5-\rho$ so that $\tau_{m-1} = \tfrac{k}{1-\epsilon} + o(k)$, and as a result, velocity $1-\epsilon$ is achieved with error probability being $e^{-o(k^{1-2\rho})}$ by Proposition~\ref{prop: colli}. This completes the proof of the achievability part of Theorem~\ref{thm:main} for the homogeneous case. 

\begin{remark}[Anywhere reliability]
The proposed bit-separation scheme allows one to track the information flows and ensures that each relay can decode reliably. We leave the treatment in the next version of the paper. 
\end{remark}

%

\subsection{The GSI-control scheme}\label{sec: achieve_with_gsi}
Let us begin with an oversight in \cite{DomanovitzPhilosof_22} that motivates us to revisit their scheme under the assumption that global state information is available.
%
In \cite{DomanovitzPhilosof_22}, it is claimed that, with per-hop feedback, relays employ an ARQ scheme in which each relay repeatedly forwards the oldest bit in its buffer and deletes it once successfully received by its successor. 
The error happens when a relay’s buffer becomes empty (unfortunately not considered in \cite{DomanovitzPhilosof_22}), which occurs at least once with high probability. In that situation, sending an arbitrary $0$ or $1$ confuses the successor, and it will mistake it as one of the message bits. With only ACK/NACK feedback, it is unclear how to explicitly signal that a relay has no bit to send. 


With global state information, the error can be fixed. Since all relays have strictly causal access to the channel states of the entire network, each relay can know when its predecessor’s buffer is empty and ignore the unintended $0$/$1$ transmissions. 

This GSI-control scheme is summarized as follows.
\begin{itemize}
	\item The source (node $0$) starts with a queue containing $b_0,b_1,\ldots,b_{m-1}$ and orderly forwards them. A bit will be deleted from the queue once successfully received.
	\item Every relay maintains a FIFO queue to store the non-erased bits from its preceding hop when its predecessor's queue is not empty. A bit is deleted from the queue once received. When the queue is empty, a $0$ is sent.
	\item The destination (node $k$) stores the non-erased bits with a queue orderly when its predecessor's queue is not empty. It declares the bits in its queue as the decoded bits at a given time (to be specified later).
\end{itemize}

Here we give a sketch of analysis and leave the details in Appendix~\ref{app:achieve_w_GSI}. 
Under this scheme, the flow of message bits behaves as customers in a tandem of queues with independent geometrically distributed service times. Customers experience delay due to the traffic in the network, and this is captured by the \emph{last-passage percolation} (LPP) formulation \cite{Seppaelaeinen1998}. Results in \cite{Johansson2000} are leveraged to provide the analysis of the error probability and the decoding time in both the polynomial regime with $\rho\in(0,1)$ and the linear regime with $\alpha > 0$.
\begin{remark}
One may observe that the GSI-control scheme can be implemented once each relay has access to queue length of the predecessor and the one-bit ACK/NACK feedback from its successor. Therefore, instead of storing the global state information (which is $k\cdot n$ bits in total up to time $n$), each relay $i$ just needs to maintain a counter $\mathrm{Count}_i[n]$ representing its queue length, being updated by the counter $\mathrm{Count}_{i-1}[n-1]$ of its predecessor $i-1$, the incoming symbol $Y_{i-1}[n]$, and the one-bit feedback $S_{i}[n]$ from its successor $i+1$. 
Then, each relay just needs to stores $\log_2 m$ bits, much more memory efficient.
\end{remark}

%% file: arXiv_v1/sec_numerical.tex
Let us numerically compare our results with prior works to close this paper. For comparison, let us focus on the setting without global state information and specialize to the case with homogeneous link erasure probabilities, that is, every $\epsilon_i = \epsilon$. In Figure~\ref{fig:achivs_compare}, for $\epsilon$ ranging from $0$ to $1$, we plot (in log scale) the converse bound of IV (Theorem~\ref{thm: conv_hetero_fb}), along with the achievable IVs of the following schemes:
\begin{enumerate}
\item
Our proposed scheme (matching the converse bound by Theorem~\ref{thm:main}).
\item 
Inovan's scheme \cite{Inovan2024}. The curve plotted here is an analytical lower bound on the achievable IV given in \cite{Inovan2024}. 
\item 
Ling and Scarlett's scheme for BSC's \cite{LingScarlett_22} adapted to BEC's as follows: the inner  code that guarantees nice decaying error probability under BSC's is replaced by sufficient number of repetitions, so that the probability that a random guess is mistaken in the case when all inner coded bits are erased satisfies the flip-probability-requirement in \cite{LingScarlett_22}.
\end{enumerate}
Note that 
for simplicity of comparison among the converse and the achievable IV of the three schemes, the message size $m$ is chosen to be an arbitrary constant, and the error probability criteria $\{ \xi_i\}$ are chosen such that every $\xi_i=\xi$ for some arbitrary $\xi>0$.

From the plot in Figure~\ref{fig:achivs_compare}, one can see that our scheme achieves the converse, from which the achieved IVs of the other two schemes are far away. It is noted for Inovan's scheme \cite{Inovan2024}, even if we take the more complicated form of the achievable IV in \cite{Inovan2024}, there is still a significant gap from the converse. See \cite[Chapter 4]{Inovan2024} for more details.


%

\begin{figure}[htbp]
  \centering
  \includegraphics[width=0.75\textwidth]{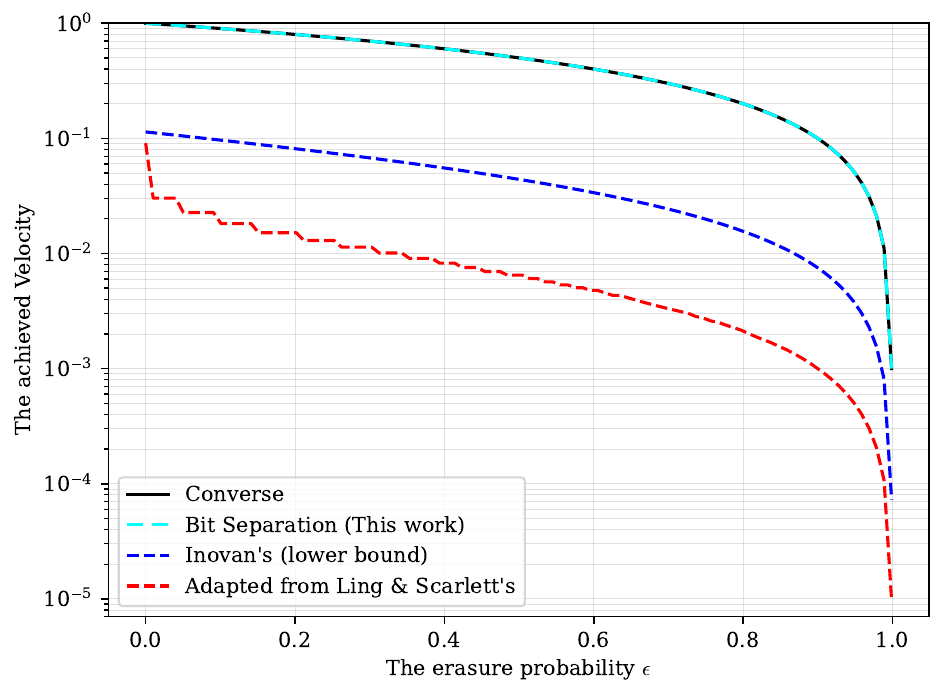}
  \caption{Information velocity (log scale) vs. erasure probability. ``Inovan's'' refers to \cite{Inovan2024}, and ``Ling \& Scarlett's'' refers to \cite{LingScarlett_22}.} 
  \label{fig:achivs_compare}
\end{figure}

%% file: arXiv_v1/app_omitted_proofs.tex
To make the argument in the sketch proof concrete, let us first
derive the expected value of $I_{j,n}$, which can be computed by recursively invoking the law of total expectation: for $j=0,1,\ldots, m-1$ and $n \ge j\lsepk$, 
$\E[I_{j,n}] = (n-j\lsepk)(1-\epsilon) =: \kappa_{j,n}$. 
Note that the separation between the expected values is 
$\kappa_{j,n}-\kappa_{j+1,n}=\lsepk(1-\epsilon)$. 
Also, at the decoding time of $b_j$, we have $\kappa_{j,\tau_j}=k+\frac{1}{2}\lsepk(1-\epsilon)$ and $\kappa_{j+1,\tau_j}=k-\frac{1}{2}\lsepk(1-\epsilon)$ as we plug in $n=\tau_j$ defined in \eqref{eq:tau_j}.

%

Observe that if the following holds $\forall\, j=0,1,\ldots, m-1$, the event $\bigcap_{j=0}^{m-1}\mcal{A}_j$ will be automatically satisfied:
\begin{equation}\label{eq:success_suff_cond_2}
\lba I_{j,n} - \kappa_{j,n}\rba < \tfrac{1}{2}\lsepk(1-\epsilon)\quad \forall\,n\ \text{with}\ j\lsepk < n \le \tau_j.
\end{equation}

The complement of the event of the above condition \eqref{eq:success_suff_cond_2} is
\begin{equation}
\mcal{E}_j \eqDef \lbp \max\limits_{j\lsepk< n\leq \tau_j} \big|I_{j,n} - \kappa_{j,n}\big| \geq \frac{1}{2}\lsepk(1-\epsilon)\rbp,
\end{equation}
which can be understood as the event that the wave front of the flow of $b_j$ \emph{escapes} out of the controlled range. 

To this end, by the union bound, we have 
\begin{equation}\label{eq:union_bd_prop}
\textstyle
\Pr\lbp\bigcap_{j=0}^{m-1} \mcal{A}_j\rbp \ge
\Pr\lbp\bigcap_{j=0}^{m-1} \mcal{E}_j^c\rbp \ge
1 - \sum_{j=0}^{m-1} \Pr\lbp\mcal{E}_j\rbp.
\end{equation}

The following lemma tells that $\Pr\lbp\mcal{E}_j\rbp$ is small.
%
%

\begin{lemma}\label{lem: esc_prob}
For $j=0,1,\ldots,m-1$, 
\begin{equation}\textstyle
\Pr\lbp\mcal{E}_j\rbp \le 2\exp\lp-\frac{\lp\frac{1}{2}\lsepk(1-\epsilon)\rp^2}{2\lp \frac{k}{1-\epsilon} + \frac{1}{2}\lsepk\rp}\rp.
\end{equation}
\end{lemma}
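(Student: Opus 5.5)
The plan is to treat the centered wave front $Z_n \eqDef I_{j,n}-\kappa_{j,n}$, $n\ge j\lsepk$, as a martingale with bounded increments and apply a maximal Azuma--Hoeffding inequality. By the transition law \eqref{eq:flow_transition}, $I_{j,n}=\sum_{s=j\lsepk+1}^{n} B_s$ with $B_s$ i.i.d.\ Bernoulli$(1-\epsilon)$. Then $Z_n=\sum_{s}(B_s-(1-\epsilon))$ is a martingale started at $Z_{j\lsepk}=0$. Each increment $D_s=B_s-(1-\epsilon)$ takes values in $[-(1-\epsilon),\epsilon]$, an interval of length $1$, and $|D_s|\le 1$.

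Next count the steps. The maximum in $\mcal{E}_j$ runs over $j\lsepk<n\le\tau_j$, so the number of steps is
\begin{equation*}
N\eqDef \tau_j-j\lsepk=\tfrac{k}{1-\epsilon}+\tfrac{1}{2}\lsepk .
\end{equation*}
This is exactly the expression in the denominator of the claimed bound. Apply the maximal Azuma--Hoeffding inequality with $\lambda=\tfrac12\lsepk(1-\epsilon)$ and $|D_s|\le 1$:
\begin{equation*}
\Pr\lbp \max_{n\le \tau_j} |Z_n|\ge \lambda\rbp \le 2\exp\lp -\frac{\lambda^2}{2N}\rp .
\end{equation*}
Substituting $\lambda$ and $N$ gives precisely the bound of Lemma~\ref{lem: esc_prob}. (Hoeffding's range bound, which yields $2\lambda^2/N$, would be stronger, but the weaker constant suffices.)

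The only step needing care is the passage from the terminal-time Azuma bound to the running maximum. For each sign, $e^{\theta Z_n}$ is a nonnegative submartingale, so Doob's maximal inequality gives
\begin{equation*}
\Pr\lbp\max_{n\le\tau_j} Z_n\ge\lambda\rbp\le e^{-\theta\lambda}\,\E\lb e^{\theta Z_{\tau_j}}\rb .
\end{equation*}
Hoeffding's lemma bounds the right side by $\exp(-\theta\lambda+\theta^2N/2)$, and optimizing at $\theta=\lambda/N$ yields $\exp(-\lambda^2/(2N))$. The same argument applied to $-Z_n$ handles the lower deviation, and a union bound over the two signs supplies the factor $2$. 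If $\tau_j$ is not an integer, one can replace $N$ by $\lfloor N\rfloor\le N$, which only helps.
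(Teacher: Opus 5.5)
Your proposal is correct and follows the paper's proof. The paper likewise shows that $I_{j,n}-\kappa_{j,n}$ is a martingale with increments bounded by $1$, and applies the maximal Azuma--Hoeffding inequality over the $\tau_j-j\lsepk=\tfrac{k}{1-\epsilon}+\tfrac12\lsepk$ steps. Your Doob-plus-Hoeffding-lemma derivation only spells out the maximal inequality that the paper cites.
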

\begin{IEEEproof}[Proof of Lemma~\ref{lem: esc_prob}]
Let $\Delta_{j,n}:= I_{j,n}-\kappa_{j,n}$. By \eqref{eq:flow_transition}, given $\Delta_{j,n},\Delta_{j,n-1},\ldots$, with probability $\epsilon$, 
\[
\Delta_{j,n+1}-\Delta_{j,n} 
= (I_{j,n+1}-I_{j,n}) - (\kappa_{j,n+1} -\kappa_{j,n})  
= -(1-\epsilon),
\]
and with probability $1-\epsilon$, 
\[
\Delta_{j,n+1}-\Delta_{j,n} 
= (I_{j,n+1}-I_{j,n}) - (\kappa_{j,n+1} -\kappa_{j,n})  
= 1-(1-\epsilon) = \epsilon.
\]
As a result, for every bit index $j$, 
\[
\E[\Delta_{j,n+1}-\Delta_{j,n}\,\vert\,\Delta_{j,n},\Delta_{j,n-1},\ldots] = 0, 
\] 
and hence $\{\Delta_{j,n}\}_{n}$ is a martingale. Moreover, it satisfies the bounded difference property: $\forall\,n$, with probability $1$, 
\[
\lba\Delta_{j,n+1}-\Delta_{j,n}\rba = \max(1-\epsilon,\epsilon) \leq 1.
\]
The proof is then complete by applying the maximal Azuma-Hoeffding inequality to the martingale $\{\Delta_{j,n}\}_{n}$ (and also note that the total number of $n$'s is $\tau_j-j\lsepk = \tfrac{k}{1-\epsilon} + \tfrac{1}{2}\lsepk$).
	\end{IEEEproof}
	
The proof of Proposition~\ref{prop: colli} is complete by plugging Lemma~\ref{lem: esc_prob} into \eqref{eq:union_bd_prop}.

%% file: arXiv_v1/sec_converse.tex
%
In the following, we give a proof of Theorem~\ref{thm: conv_hetero_fb}. The procedure of the proof basically follows that in \cite[Section 3]{Inovan2024}. The twist is that we extend it to the setting where relays are allowed to access global state information and the erasure probabilities are heterogeneous along the hops. 
For notational convenience, let us denote the target upper bound  $\tfrac{1}{\zeta( \{\epsilon_i\})}$ by $\tilde{v}_1^{\gsi}$. The first four inequality is straightforward by definition. Therefore, it suffices to prove only the last inequality.

First, we derive the relation of the mutual information of channel outputs and the message over one hop and one time instant.
	\begin{lemma}\label{lem: conv_1hopt1timegap}
Let $\bm{S}^n\equiv \lp S_0[1:n],S_1[1:n],...,S_{k-1}[1:n]\rp$. 
For $i=0$ and all $n\ge 1$,
		\begin{equation}\label{eq: MI_1i1n_ratio_0}
		\frac{\MI{M}{Y_{0}[1:n+1],\bm{S}^{n+1}}-\MI{M}{Y_{0}[1:n],\bm{S}^n}}{1-\MI{M}{Y_{0}[1:n],\bm{S}^n}}\leq 1-\epsilon_{0}.
		\end{equation}
For all $i\ge 1$ and all $n\geq 1$,
		\begin{equation}\label{eq: MI_1i1n_ratio_gen}
		\frac{\MI{M}{Y_{i}[1:n+1],\bm{S}^{n+1}}-\MI{M}{Y_{i}[1:n],\bm{S}^n}}{\MI{M}{Y_{i-1}[1:n],\bm{S}^n}-\MI{M}{Y_{i}[1:n],\bm{S}^n}}\leq 1-\epsilon_{i}.
		\end{equation}
	\end{lemma}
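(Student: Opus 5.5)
The plan is to handle both inequalities through a single chain-rule step. We expand the increment $\MI{M}{Y_i[1:n+1],\bm{S}^{n+1}}-\MI{M}{Y_i[1:n],\bm{S}^n}$ as
\[
\CMI{M}{S[n+1]}{Y_i[1:n],\bm{S}^n}+\CMI{M}{Y_i[n+1]}{Y_i[1:n],\bm{S}^{n+1}},
\]
where $S[n+1]$ denotes the vector $\bm{S}$ at time $n+1$. The first term is zero. The states at time $n+1$ are independent of $M$ and of everything up to time $n$, since the erasures are i.i.d. over time and independent of the inputs.

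For the second term, $Y_i[n+1]$ is a deterministic function of $(X_i[n+1],S_i[n+1])$. When $S_i[n+1]=0$ the output is $\ast$ and contributes nothing. When $S_i[n+1]=1$ we have $Y_i[n+1]=X_i[n+1]$. Moreover, $S_i[n+1]$ is independent of $(M,X_i[n+1],Y_i[1:n],\bm{S}^n)$ and of the other time-$(n+1)$ states, which carry no information about $M$. Conditioning on the event $S_i[n+1]=1$ therefore gives
\[
\CMI{M}{Y_i[n+1]}{Y_i[1:n],\bm{S}^{n+1}}=(1-\epsilon_i)\,\CMI{M}{X_i[n+1]}{Y_i[1:n],\bm{S}^{n}}.
\]
It then remains to bound $\CMI{M}{X_i[n+1]}{Y_i[1:n],\bm{S}^n}$ by the denominator in each case.

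For $i=0$, $X_0[n+1]$ is binary, so this conditional mutual information is at most $\CET{X_0[n+1]}{Y_0[1:n],\bm{S}^n}\le 1$. We would rather bound it by $\CET{M}{Y_0[1:n],\bm{S}^n}$, which equals $m-\MI{M}{Y_0[1:n],\bm{S}^n}$ if information is measured in bits. To obtain the stated denominator $1-\MI{M}{Y_0[1:n],\bm{S}^n}$, I expect that the normalization in the surrounding argument (as in \cite{Inovan2024}) measures mutual information relative to $\ET{M}$, i.e.\ normalized by $m$. Under that normalization the bound is $\CET{M}{Y_0[1:n],\bm{S}^n}$ normalized, which is exactly $1-\MI{M}{Y_0[1:n],\bm{S}^n}$. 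I would state this convention explicitly. With it, the bound $\CMI{M}{X_0[n+1]}{\cdot}\le \CET{M}{\cdot}$ is immediate.

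For $i\ge1$, the key point is that, under GSI, $X_i[n+1]$ is a function of $(Y_{i-1}[1:n],\bm{S}^n)$. This gives
\[
\CMI{M}{X_i[n+1]}{Y_i[1:n],\bm{S}^n}\le \CMI{M}{Y_{i-1}[1:n]}{Y_i[1:n],\bm{S}^n}.
\]
By the chain rule, the right-hand side equals
\[
\MI{M}{Y_{i-1}[1:n],Y_i[1:n],\bm{S}^n}-\MI{M}{Y_i[1:n],\bm{S}^n}.
\]
The main obstacle is the Markov-type step showing that $\MI{M}{Y_{i-1}[1:n],Y_i[1:n],\bm{S}^n}=\MI{M}{Y_{i-1}[1:n],\bm{S}^n}$. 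This says that, given $(Y_{i-1}[1:n],\bm{S}^n)$, the outputs $Y_i[1:n]$ are determined, because each $X_i[\ell]$ for $\ell\le n$ is a function of $(Y_{i-1}[1:\ell-1],\bm{S}^{\ell-1})$ and $Y_i[\ell]$ is a function of $(X_i[\ell],S_i[\ell])$. This follows by induction on $\ell$. With it the second inequality follows, and dividing both increments by their (nonnegative) denominators completes the proof. If a denominator is zero, the numerator is zero as well by the same bound, so that case is trivial.
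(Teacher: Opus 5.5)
Your proposal is correct and follows essentially the same route as the paper. The numerator reduces to $(1-\epsilon_i)\,\CMI{M}{X_i[n+1]}{Y_i[1:n],\bm{S}^n}$ by the independence of the time-$(n+1)$ states, and the denominator dominates this same quantity because $X_i[n+1]$ and $Y_i[1:n]$ are functions of $(Y_{i-1}[1:n],\bm{S}^n)$, with the $i=0$ case obtained by putting $M$ in place of $Y_{-1}[1:n]$. The one difference is the $1$ in the $i=0$ denominator: the paper only needs the single-bit case $m=1$, since it proves $v^{\gsi}_1\le 1/\zeta(\{\epsilon_i\})$, so $\ET{M}=1$ bit directly. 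Your normalization-by-$m$ convention is therefore unnecessary there, though it is valid and gives the general-$m$ version.
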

	\begin{IEEEproof}[Proof of Lemma~\ref{lem: conv_1hopt1timegap}]
Let us introduce short-hand notations $C=\lp Y_{i}[1:n],\bm{S}^{n}\rp,\,Z_{\text{in}}=X_{i}[n+1],\,Z_{\text{out}}=Y_{i}[n+1]$. For the denominator, we have that
\begin{align}
		\MI{M}{Y_{i-1}[1:n],\bm{S}^n}-\MI{M}{Y_i[1:n],\bm{S}^n}&=\MI{M}{Y_{i-1}[1:n],X_i[1:n],Z_{\text{in}},\bm{S}^n}-\MI{M}{Y_i[1:n],\bm{S}^n}\\
		&=\MI{M}{Y_{i-1}[1:n],Y_i[1:n],Z_{\text{in}},\bm{S}^n}-\MI{M}{Y_i[1:n],\bm{S}^n}\\
		&\geq \CMI{M}{Z_{\text{in}}}{C},
\end{align}
where the first and second equality is due to Markov chain $M - \lp Y_{i-1}[1:\iota], \bm{S}^{\iota}\rp - X_i[\iota+1] - Y_i[\iota+1]$ for all $\iota\geq 1$.

For the numerator, we have that
		\begin{equation}
		\begin{aligned}
		\MI{M}{Z_{\text{out}},\bm{S}[n+1],C} - \MI{M}{C}&=\CMI{M}{Z_{\text{out}},\bm{S}[n+1]}{C}\\
		&=\CMI{M}{Z_{\text{out}}}{\bm{S}[n+1],C}\\
		&=\Pr\lbp S_{i}[n+1]=0\rbp\cdot \CMI{M}{Z_{\text{out}}}{C,S_i[n+1]=0} \\
		&\quad+ \Pr\lbp S_i[n+1]=1\rbp\cdot \CMI{M}{Z_{\text{out}}}{C,S_i[n+1]=1}\\
		&=\lp1-\epsilon_i\rp\CMI{M}{Z_{\text{in}}}{C},
		\end{aligned}
		\end{equation}
		where the second equality is by the mutual independence among $\bm{S}[n+1]$, $\bm{S}^n$, and $M$, and the third equality is by independence of states between time $n$ and time $n+1$.
		
		As a result, the LHS of the inequality is upper bounded by $1-\epsilon_i$ and \eqref{eq: MI_1i1n_ratio_gen} is proved. \eqref{eq: MI_1i1n_ratio_0} can be shown by replacing $Y_{i-1}[1:n]$ by $M$.
	\end{IEEEproof}
	
With Lemma~\ref{lem: conv_1hopt1timegap}, we introduce the next lemma that bounds the asymptotic behavior of the mutual information between the message $M$ and the observation of a certain relay up to a certain time, where the relay index is linear in the time index.
	\begin{lemma}\label{lem: conv_convergence}
		There exists a function $g:\mbb{Z}_+^2\backslash \lbp\lp0,0\rp\rbp\ra \mbb{R}$ such that
		\begin{equation}\label{eq: MI_bnd_by_g}
		\MI{M}{Y_{i-1}[1:i+n-1],\bm{S}^{i+n}}\leq g(i,n),
		\end{equation}
		for all $i>0$ and that for any $\alpha> \tilde{v}_1^{\gsi}$,
		\begin{equation}\label{eq: g_asymp}
		\lim_{i\ra\infty }g\lp i,\left\lceil\tfrac{1-\alpha}{\alpha}i \right\rceil + 1 \rp = 0.
		\end{equation}
	\end{lemma}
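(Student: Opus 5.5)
We will define $g$ through the recursion supplied by Lemma~\ref{lem: conv_1hopt1timegap}, and then show that $g$ vanishes along rays whose slope exceeds the velocity bound. Write $a_{i,n} \eqDef \MI{M}{Y_{i-1}[1:i+n-1],\bm{S}^{i+n}}$, normalizing so that the source corresponds to $a_{0,\cdot}=1$; we measure information in units of $H(M)$, or equivalently treat the quantities in Lemma~\ref{lem: conv_1hopt1timegap} as already normalized. Rearranging \eqref{eq: MI_1i1n_ratio_gen} gives
\[
\MI{M}{Y_{i}[1:n+1],\bm{S}^{n+1}} \le \epsilon_i\,\MI{M}{Y_{i}[1:n],\bm{S}^{n}} + (1-\epsilon_i)\,\MI{M}{Y_{i-1}[1:n],\bm{S}^{n}},
\]
with $\MI{M}{Y_{i-1}[1:n],\bm{S}^{n}}$ replaced by $1$ when $i=0$. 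Two further facts are needed: $\MI{M}{\cdot}$ is nondecreasing in the observation window, and node $i$ has received nothing informative before time $i+1$, so $\MI{M}{Y_{i}[1:n],\bm{S}^n}=0$ for $n\le i$ (hop $i$ carries only a function of $\bm{S}$ before that time). We therefore define $g$ on the shifted lattice by the equality version of this recursion, with boundary values $g\equiv 1$ on the source side and $g=0$ where the time index is too small. An induction on $i+n$ then shows $a\le g$, since the right-hand side is monotone in both arguments with nonnegative coefficients. This establishes \eqref{eq: MI_bnd_by_g}.

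Next we identify $g$ probabilistically. The equality recursion is exactly the recursion for $\Pr\{T_i \le t\}$, where $T_i=\sum_{l=0}^{i-1} G_l$ and the $G_l\sim\mathrm{Geom}(1-\epsilon_l)$ are independent. The reason is that $\Pr\{T_{i+1}\le t+1\} = \epsilon_i\Pr\{T_{i+1}\le t\} + (1-\epsilon_i)\Pr\{T_i\le t\}$, obtained by conditioning on whether the last geometric step ends exactly at time $t+1$. Equivalently, $g$ is the arrival probability of the single-bit forward-the-last-received wave front. We would verify this identification with the same induction, matching boundary conditions. Under the shifted indexing this gives $g(i,n)=\Pr\{T_i \le i+n\}$, up to an off-by-one that we will pin down when matching boundaries.

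Finally, set $n=\lceil\frac{1-\alpha}{\alpha}i\rceil+1$, so that $i+n\approx i/\alpha$. Then $g(i,n)=\Pr\{T_i\le i/\alpha+O(1)\}$, while $\E[T_i]=\sum_{l<i}\frac{1}{1-\epsilon_l}=i\,\zeta+o(i)$ by Assumption~\ref{assumption:erasure_prob}. The hypothesis $\alpha>\tilde v_1^{\gsi}=1/\zeta$ is equivalent to $1/\alpha<\zeta$, so the threshold $i/\alpha$ lies below the mean by a linear margin $(\zeta-1/\alpha)i$. Because the $\epsilon_l$ are bounded away from $1$, the variances of the $G_l$ are uniformly bounded, so Chebyshev's inequality gives $\Pr\{T_i\le i/\alpha+O(1)\}=O(1/i)\to0$; a Chernoff bound would also work. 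This proves \eqref{eq: g_asymp}.

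The main obstacle is the indexing bookkeeping. Lemma~\ref{lem: conv_1hopt1timegap} is stated with $Y_i[1:n]$ and $\bm{S}^n$ at the same time index, whereas \eqref{eq: MI_bnd_by_g} pairs $Y_{i-1}[1:i+n-1]$ with $\bm{S}^{i+n}$. We therefore need the extra state $\bm{S}[i+n]$ not to increase the information; this holds because $\bm{S}[i+n]$ is independent of $(M,Y_{i-1}[1:i+n-1],\bm{S}^{i+n-1})$. We also need to fix the boundary values exactly so that the Markov-chain and monotonicity arguments apply at the edges of the lattice.
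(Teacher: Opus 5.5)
Your proposal is correct and matches the paper's proof. The paper defines $g$ by exactly your equality recursion, $g(i,n)=(1-\epsilon_{i-1})g(i-1,n)+\epsilon_{i-1}g(i,n-1)$ with $g(0,n)=1$ and $g(i,0)=0$, and identifies it as $g(i,n)=\Pr\{\sum_{\ell=0}^{i-1}G_\ell<i+n\}$, so your off-by-one resolves to $\Pr\{T_i\le i+n-1\}$; the only differences are that you use Chebyshev's inequality (valid since the $\epsilon_\ell$ are bounded away from $1$) where the paper uses the Lyapunov CLT, and that you explicitly justify dropping the extra state $\bm{S}[i+n]$ and the zero boundary values, which the paper leaves implicit.
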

	\begin{IEEEproof}[Proof of Lemma~\ref{lem: conv_convergence}]
	Let $g$ be the function with the following recursive relation:
	\begin{equation}
	g(i,n)=
	\begin{cases}
	0, &n=0\\ 
	1, &i=0\\ 
	\lp 1-\epsilon_{i-1}\rp g(i-1,n) + \epsilon_{i-1} g(i,n-1),&\text{otherwise.}
	\end{cases}
	\end{equation}
	This combined with Lemma~\ref{lem: conv_1hopt1timegap} implies the inequality \eqref{eq: MI_bnd_by_g}. 
	
It is also not hard to see that $g(i,n)$ corresponds to the cumulative distribution function of a sum of random variables: 
\begin{equation}\label{eq: g_to_geosum}
g(i,n)=\Pr\lbp \sum_{\ell=0}^{i-1} G_\ell <i+n\rbp,
\end{equation}
where $\lbp G_\ell\rbp$ denote a sequence of independent geometric random variables and $G_\ell\sim\mathrm{Geom}(1-\epsilon_\ell)$.
	
Let us define $\mu_\ell \eqDef \E\lb G_\ell\rb=\frac{1}{1-\epsilon_\ell}$, $\sigma_\ell^2 \eqDef \Var\lb G_\ell\rb$, $s_i^2 \eqDef \sum_{\ell=0}^{i-1}\sigma_\ell^2 $, and $\hat{G}_\ell\eqDef  \frac{1}{s_i}\lp G_\ell-\mu_\ell\rp$. By the Lyapunov CLT and \eqref{eq: g_to_geosum}, we have the following: for sufficiently large $i$,
\begin{align}
\textstyle
g\lp i,\left\lceil\frac{1-\alpha}{\alpha}i \right\rceil +1 \rp
&\textstyle
\leq\Pr\lbp \sum\limits_{\ell=0}^{i-1} \hat{G}_\ell<\frac{i\lp\frac{1}{\alpha}-\frac{1}{i}\sum_{\ell=0}^{i-1}\frac{1}{1-\epsilon_\ell}\rp+1}{s_i}\rbp\\
&\textstyle
\leq \Pr\lbp \sum\limits_{\ell=0}^{i-1} \hat{G}_\ell<\frac{i\lp\frac{1}{\alpha}-\frac{1}{i}\sum_{\ell=0}^{i-1}\frac{1}{1-\epsilon_\ell}\rp}{\sqrt{i\cdot\lp\max\limits_{0\leq\ell< i}\sigma_\ell+1\rp}}\rbp\\
&\textstyle
\ra 0\quad \text{as $i\ra\infty$}. \qquad\lp\text{ since }\frac{1}{\alpha}<\frac{1}{i}\sum_{\ell=0}^{i-1}\frac{1}{1-\epsilon_\ell}\text{ for large }i\rp
\end{align}
This proves \eqref{eq: g_asymp}
\end{IEEEproof}
	
	By Lemma~\ref{lem: conv_convergence}, $g(i,n)$ is an upper bound of mutual information between $M$ and the observation of relay $i$ after the $(i+n-1)$-th channel use, and when the relay-time ratio is above the threshold $\tilde{v}_1^{\gsi}$, the upper bound converges to 0 in the asymptotic regime.
	Therefore for any $\alpha>\tilde{v}_1^{\gsi}$, by Fano's inequality, 
	\begin{equation}
	\begin{aligned}
	&h_2\lp\Pr\lbp\hat{M}_{i,\left\lceil\frac{i}{\alpha}\right\rceil}\neq M\rbp\rp\\
	&\geq \CET{M}{Y_{i}\lb 1:\left\lceil\frac{i}{\alpha}\right\rceil\rb,\bm{S}^{\left\lceil\frac{i}{\alpha}\right\rceil}}\\
	&= \ET{M} - \MI{M}{Y_{i}\lb 1:\left\lceil\frac{i}{\alpha}\right\rceil\rb,\bm{S}^{\left\lceil\frac{i}{\alpha}\right\rceil}}\\
	&\geq 1-g\lp i, \left\lceil\frac{1-\alpha}{\alpha}i\right\rceil+1\rp\longrightarrow 1\text{ as }i\ra \infty,
	\end{aligned}
	\end{equation}
	where $h_2(\cdot)$ is the binary entropy function.

	Thus, by continuity of $h_2$,
	\begin{equation}
	\Pr\lbp\hat{M}_{i,\left\lceil\frac{i}{\alpha}\right\rceil}\neq M\rbp \ra \frac{1}{2}\text{ as }i\ra\infty.
	\end{equation}

	By Definition~\ref{def:IV}, for any velocity $\alpha > v_1^{\gsi}$, vanishing error probability is impossible, and the proof is completed.

%% file: arXiv_v1/app_achieve_het_wo_GSI.tex
%
In the heterogeneous case, we employ the same scheme as in the homogeneous case, and the approach for performance analysis is similar. In particular, successful dissemination of all the bits is guaranteed as along as the fronts of individual flows of message bits are all concentrated to their respective means. The issue is that, $I_{j,n}-\kappa_{j,n}$ in the proof of Lemma~\ref{lem: esc_prob} (the difference between each front of the flow and its expected value) is no longer a martingale. 

To overcome this technical difficulty, a slight twist to the decoding time is made in the proposed scheme. The decoding time of $\hat{b}_j$ is set to be
\begin{equation}\label{eq: tau_het}
\textstyle
\tau_j=\sum_{i=0}^{k-1}\frac{1}{1-\epsilon_i}+\lp j+\frac{1}{4}\rp\lsepk.
\end{equation}
Slight difference can be observed when \eqref{eq: tau_het} is compared with \eqref{eq:tau_j}. Note that the scale of separation is the same as that in the homogeneous case and hence should be large enough.

Though each hop now is with different erasure probability, one may still characterize 
Since the erasure probabilities are heterogeneous, the transition of the Markov process $\lbp I_{j,n}\rbp$ in \eqref{eq:flow_transition} now becomes
\begin{equation}
\Pr\lbp I_{j,n} = i' \,\mv\, I_{j,n-1}=i\rbp
= 
\begin{cases}
1-\epsilon_i, &i'=i+1\\
\epsilon_i, &i'=i.
\end{cases}
\end{equation}
Note that success of the scheme is still guaranteed by the event $\bigcap_{j=0}^{m-1}\mcal{A}_i$ (see \eqref{eq:success_suff_cond_1}). 
Hence, it suffices to prove the following proposition, which can be viewed as the counterpart of Proposition~\ref{prop: colli} in the homogeneous case.

\begin{proposition}\label{prop: colli_het}
For $k$ large enough,
\begin{equation}\label{eq: suffi_cond_het}
	\Pr\lbp \bigcap_{j=0}^{m-1}\mcal{A}_j\rbp \geq 1-2m\exp\lp -c''\frac{k^{2\dsep}}{1+\frac{c}{4}v_{\min}\max\lp\frac{1}{v_{\min}}-1,1\rp^2 k^{\dsep-\frac{1}{2}}}\rp, 
\end{equation}
where $c''=\frac{c^2}{50}\frac{v_{\min}^3}{v_{\max}^2}$, $v_{\max}=1-\inf_{i\ge0} \epsilon_i$, and $v_{\min}=1-\sup_{i\ge 0}\epsilon_i$. 
Note that by Assumption~\ref{assumption:erasure_prob}, we have $v_{\max}\geq v_{\min}>0$. 
\end{proposition}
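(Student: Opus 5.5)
The natural route is to work with the arrival times of each bit at each node, rather than with the front position $I_{j,n}$ at a fixed time. The difficulty with $I_{j,n}$ in the heterogeneous case is that its drift depends on its own current location. The arrival times avoid this: let $T_{j,i}$ be the first time the flow of $b_j$ reaches node $i$. Then $T_{j,i}-j\lsepk=\sum_{\ell=0}^{i-1}G^{(j)}_\ell$, where the $G^{(j)}_\ell\sim\mathrm{Geom}(1-\epsilon_\ell)$ are independent across $\ell$. The mean is $\mu_i:=\sum_{\ell<i}\frac{1}{1-\epsilon_\ell}$.

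The proof would go in four steps.

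\textbf{Step 1 (sufficient condition).} Show that $\mcal{A}_j$ holds whenever, for all $0\le i\le k$,
\begin{equation*}
\lba T_{j,i}-j\lsepk-\mu_i\rba<\tfrac{1}{8}\lsepk \quad\text{and}\quad \lba T_{j+1,i}-(j+1)\lsepk-\mu_i\rba<\tfrac{1}{8}\lsepk .
\end{equation*}
On this event $T_{j+1,i}>T_{j,i}$ for every node $i$, because the deterministic gap is $\lsepk$ and the two slacks sum to only $\frac14\lsepk$. So the front of $b_{j+1}$ never overtakes the front of $b_j$; the statement in terms of $I$ follows because $I_{j,n}\ge i$ iff $T_{j,i}\le n$.

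At the decoding time $\tau_j=\mu_k+(j+\frac14)\lsepk$ we also have $T_{j,k}<\tau_j$, since $b_j$ arrives within $\frac18\lsepk$ of $\mu_k+j\lsepk$. Similarly $T_{j+1,k}>\mu_k+(j+\frac78)\lsepk>\tau_j$.

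One more fact is needed, because condition \eqref{eq:success_suff_cond_1} is stated for all times up to $\tau_j$. Take $i\le k$ with $T_{j+1,i}\le\tau_j$. The bound on $T_{j+1,i}$ then forces $\mu_i\le\mu_k-\frac58\lsepk$, so only nodes strictly before $k$ are involved, and the ordering of arrival times handles them. The step-$\frac14$ offset in $\tau_j$ must be checked to be compatible with these slacks; if the arithmetic fails I would rebalance the slack to $\frac1{10}$.

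\textbf{Step 2 (concentration at each node).} For fixed $i$, bound $\Pr\{|\sum_{\ell<i}(G_\ell-\mu_\ell)|\ge \frac18\lsepk\}$ using a Bernstein-type inequality for sums of independent geometric (sub-exponential) variables. Each $G_\ell$ has variance $\epsilon_\ell/(1-\epsilon_\ell)^2\le \max(\frac1{v_{\min}}-1,1)^2/\dots$, and its sub-exponential scale is of order $1/v_{\min}$. This yields
\begin{equation*}
\exp\lp-\frac{(\lsepk/8)^2}{2(i\sigma^2+b\lsepk/8)}\rp,
\end{equation*}
which is exactly the form $\frac{k^{2\dsep}}{1+(\cdot)k^{\dsep-1/2}}$ appearing in \eqref{eq: suffi_cond_het}.

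\textbf{Step 3 (uniform over nodes).} Pass from a single node to all $i\le k$ simultaneously. The partial sums $\sum_{\ell<i}(G_\ell-\mu_\ell)$ form a martingale in $i$. Applying Doob's maximal inequality to the exponential supermartingale gives the same bound for the maximum over $i\le k$, with no union-bound loss. This is the main obstacle: a maximal Bernstein bound with unbounded geometric increments, where the constants must be tracked carefully enough to recover $c''=\frac{c^2}{50}v_{\min}^3/v_{\max}^2$. I would use the MGF bound $\E e^{\lambda(G-\mu)}\le e^{\lambda^2\sigma^2/(2(1-\lambda b))}$. The factors of $v_{\max}$ enter through the conversion from time-slack to node-slack.

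\textbf{Step 4 (union bound).} The two-sided bound costs a factor of $2$, and a union over the $m$ bits gives the factor $m$. Each pair condition involves $b_j$ and $b_{j+1}$, but each bit's event is counted once, so the total is $2m$ times the exponential. The qualifier "for $k$ large enough" absorbs the lower-order discrepancy between $\mu_k$ and $k\zeta$, and any ceiling effects.
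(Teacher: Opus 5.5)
Your route is genuinely different from the paper's and is sound in outline.

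\textbf{How the two routes compare.} The paper works in the node coordinate. It introduces the deterministic profile $\kappa_{j,n}$ and proves a node-separation lemma (Lemma~\ref{lem: kappa_sep_hetero}). Because $I_{j,n}-\kappa_{j,n}$ is not a martingale, it passes to the time-changed process $T(I_{j,n})-(n-j\lsepk)$, with $T(i)=\sum_{\ell<i}\frac{1}{1-\epsilon_\ell}$. This is a martingale in $n$ with increments bounded by $\max(\frac{1}{v_{\min}}-1,1)$, so maximal Azuma--Hoeffding applies directly. There, $v_{\max}$ enters only when the node tolerance $\frac14 v_{\min}\lsepk$ is converted into the time tolerance $\frac15\frac{v_{\min}}{v_{\max}}\lsepk$.

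You work in the time coordinate, where consecutive bits are separated by exactly $\lsepk$ at every node. So you need neither the separation lemma nor any conversion. The price is that $S_i=\sum_{\ell<i}(G_\ell-\mu_\ell)$ has independent but unbounded increments. You therefore need a sub-exponential MGF bound plus Doob's inequality for $e^{\lambda S_i}$, which does give the maximal bound at no extra cost. In exchange you get a larger admissible tolerance and the true variance $\sum_\ell \epsilon_\ell/(1-\epsilon_\ell)^2$. This is never larger than the paper's proxy $\max(\frac{1}{v_{\min}}-1,1)^2\sum_\ell \frac{1}{1-\epsilon_\ell}$.

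\textbf{First repair: the ordering condition in Step~1.} Concluding $T_{j+1,i}>T_{j,i}$ for all $i$ only gives $I_{j+1,n}\le I_{j,n}$. Equality, with both fronts on the same node, is exactly the event in which $b_j$ is overwritten. What $\mcal{A}_j$ needs is $T_{j,i+1}\le T_{j+1,i}$ for $i<k$: $b_j$ must leave node $i$ no later than $b_{j+1}$ reaches it. Your slack does deliver this, since $\mu_{i+1}-\mu_i\le 1/v_{\min}=o(\lsepk)$, but you must state it.

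\textbf{Second repair: the constants in Step~3.} The bookkeeping cannot work as you describe it. You never convert between time and node tolerances, so $v_{\max}$ has no way to enter your bound.

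With time tolerance $s\lsepk$, $\epsilon_i\equiv\epsilon$ and $\dsep<\frac12$, your exponent is $\frac{s^2c^2(1-\epsilon)^2}{2\epsilon}k^{2\dsep}(1+o(1))$. This is also the true moderate-deviation rate of your event, so no sharper inequality improves it. For $s=\frac18$ and $\epsilon=\frac12$ it equals $c^2k^{2\dsep}/256$. That is below $c''k^{2\dsep}=c^2k^{2\dsep}/100$, in a case where the display coincides with what the paper proves. Your own constraints allow $s=\frac14$: the decoding offset is the binding constraint, and the ordering needs only $s<\frac12$. So enlarge the tolerance rather than shrinking it to $\frac1{10}$.

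\textbf{Do not target the displayed $c''$ verbatim.} Lemma~\ref{lem: esc_hetero} actually yields $\max(\frac{1}{v_{\min}}-1,1)^2$ multiplying the whole denominator, not just the $k^{\dsep-1/2}$ term. For small $v_{\min}$ the display as written is false. Take $\epsilon_i\equiv\epsilon$, $\dsep<\frac12$ and fixed $m$. Since $\mcal{A}_0\subseteq\{T_{0,k}\le\tau_0\}$, the failure probability is at least
\[
\Pr\{T_{0,k}>\tau_0\}=\exp\Bigl(-\frac{c^2(1-\epsilon)^2}{32\epsilon}k^{2\dsep}(1+o(1))\Bigr).
\]
For large $k$ this exceeds $2m\exp\bigl(-\frac{c^2(1-\epsilon)}{50}k^{2\dsep}(1+o(1))\bigr)$ once $1-\epsilon<\frac{32}{82}$.

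Both arguments do establish a bound $1-2m\exp\bigl(-Ck^{2\dsep}/(1+C'k^{\dsep-1/2})\bigr)$ with parameter-dependent constants. That is all the velocity result uses.
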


\begin{IEEEproof}[Proof of Proposition~\ref{prop: colli_het}]
Instead of $\E[I_{j,n}]$, let us define the following quantity (with slight abuse of notation) to capture the
location of flow $I_{j,n}$:
\begin{equation}\label{eq: kappa_j_het}
	\kappa_{j,n} \eqDef \sup\lbp i\in \mbb{Z}_+\,\bigg| \sum\limits_{\ell=0}^{i-1} \frac{1}{1-\epsilon_\ell}\leq n-j\lsepk\rbp,
\end{equation}
where we set $\epsilon_{i'}=1-v_{\min}\ \forall\,i'\geq k$.

We have the following lemma regarding the spacing between $\kappa_{j,n}$'s for consecutive bits.
\begin{lemma}\label{lem: kappa_sep_hetero}
For $k$ large enough, 
	\begin{equation}
	\lba\kappa_{j,n}-\kappa_{j-1,n}\rba> \frac{1}{2}v_{\min}\lsepk
	\quad \forall\,j=1,2,\dots, m-1,\ \forall\,n>j\lsepk.
	\end{equation}
\end{lemma}
\begin{IEEEproof}[Proof of Lemma~\ref{lem: kappa_sep_hetero}]
	Let $\tilde{i} = \kappa_{j-1,n}$. We have
\begin{align}
	\lba\kappa_{j,n}-\kappa_{j-1,n}\rba &=\left|\sup\lbp i\,\bigg|\, \sum_{\ell=0}^{i-1}\frac{1}{1-\epsilon_\ell}\leq t-j\lsepk\rbp - \tilde{i}\,\right|\\
	&\geq \left|\sup\lbp i\,\bigg| \,\sum_{\ell=i}^{\tilde{i}}\frac{1}{1-\epsilon_\ell}\geq \lsepk\rbp - \tilde{i} \,\right|\\
	&\geq \left| \sup\lbp i\,\bigg|\, (\tilde{i}-i+1)\cdot\frac{1}{v_{\mathrm{min}}} \geq \lsepk \rbp -\tilde{i}\,\right|\\
	&= \lba\left\lfloor\tilde{i} - v_{\mathrm{min}}\lsepk + 1\right\rfloor - \tilde{i}\,\rba\\
	&\geq \frac{1}{2}v_{\mathrm{min}}\lsepk, 
\end{align}
where the last inequality holds when $k$ is sufficiently large.
\end{IEEEproof}

By Lemma~\ref{lem: kappa_sep_hetero} and the fact that 
\begin{equation}
\kappa_{j,\tau_j}\geq k + \frac{1}{4}v_{\min}\lsepk,\ \kappa_{j+1,\tau_j}\leq k-\frac{1}{4}v_{\min}\lsepk,
\qquad \text{(by \eqref{eq: tau_het} and \eqref{eq: kappa_j_het})}
\end{equation}
the event $\bigcap_{j=0}^{m-1}\mcal{A}_j$ holds true 
if the following is satisfied for every $j=0,1,...,m-1$:
\begin{equation}\label{eq:success_suff_cond_2_het}
	\lba I_{j,n}-\kappa_{j,n}\rba < \frac{1}{4}v_{\min}\lsepk\quad\forall \,n\text{ with }j\lsepk<n \leq\tau_j.
\end{equation} 
Let the complement of the event of the above condition \eqref{eq:success_suff_cond_2_het} be
\begin{equation}
	\mcal{E}_j \eqDef \lbp \max_{j\lsepk<n\leq\tau_j}\lba I_{j,n}-\kappa_{j,n}\rba\geq \frac{1}{4}v_{\min}\lsepk\rbp.
\end{equation}
Therefore, again by the union bound, we have
\begin{equation}\label{eq:union_bd_prop_again}\textstyle
\Pr\lbp\bigcap_{j=0}^{m-1} \mcal{A}_j\rbp \ge
\Pr\lbp\bigcap_{j=0}^{m-1} \mcal{E}_j^c\rbp \ge
1 - \sum_{j=0}^{m-1} \Pr\lbp\mcal{E}_j\rbp.
\end{equation}

The following lemma upper bounds the probability of each $\mcal{E}_j$.
\begin{lemma}\label{lem: esc_hetero}
For $k$ large enough,
	\begin{equation}
	\Pr\lbp \mcal{E}_j \rbp\leq 2\exp\lp\frac{-\lp\frac{v_{\min}}{5v_{\max}}\lsepk\rp^2}{2\lp\tau_j-j\lsepk\rp\cdot\max\lp\frac{1}{v_{\min}}-1,1 \rp^2}\rp.
	\end{equation}
\end{lemma}
\begin{IEEEproof}[Proof of Lemma~\ref{lem: esc_hetero}]
Unlike the homogeneous case, here $\lbp I_{j,n}-\kappa_{j,n}\rbp$ is no longer a martingale. Instead, we consider the following term:
\begin{equation}
\Delta_{j,n}:= T(I_{j,n})-\lp n-j\lsepk\rp,\text{ where } T:\mbb{Z}_+\ra\mbb{R}\text{ is defined by }T(i):= \sum_{\ell=0}^{i-1}\frac{1}{1-\epsilon_\ell}.
\end{equation}

By \eqref{eq:flow_transition}, given $\Delta_{j,n},\Delta_{j,n-1},\ldots$, with probability $\epsilon_{I_{j,n}+1}$, 
\begin{equation}
\Delta_{j,n+1} - \Delta_{j,n} 
= \lp T(I_{j,n+1})-\lp \lp n+1\rp-j\lsepk\rp\rp - \lp T(I_{j,n})-\lp n-j\lsepk\rp\rp
= -1,
\end{equation}
and with probability $1-\epsilon_{I_{j,n}+1}$, 
\begin{equation}
\Delta_{j,n+1} - \Delta_{j,n} 
= \lp T(I_{j,n+1})-\lp \lp n+1\rp-j\lsepk\rp\rp - \lp T(I_{j,n})-\lp n-j\lsepk\rp\rp
= \frac{1}{1-\epsilon_{I_{j,n}+1}}-1.
\end{equation}

As a result, for every bit index $j$, 
\[
\E[\Delta_{j,n+1}-\Delta_{j,n}\,\vert\,\Delta_{j,n},\Delta_{j,n-1},\ldots] = 0, 
\] 
and hence $\{\Delta_{j,n}\}_{n}$ is a martingale. Moreover, it satisfies the bounded difference property: $\forall\,n$, with probability $1$, 
\[
\lba\Delta_{j,n+1}-\Delta_{j,n}\rba = \max\lp \frac{1}{1-\epsilon_{I_{j,n}+1}}-1,1\rp 
\le \max\lp\frac{1}{v_{\mathrm{min}}}-1, 1\rp.
\]
%

Moreover, $\forall\,a,b\in\mbb{Z}_+$ with $a<b$,
\begin{equation}\label{eq:bound_diff_T}
	\lba T(b)-T(a)\rba=\lba \sum_{\ell=0}^{b-1}\frac{1}{1-\epsilon_{\ell}}-\sum_{\ell=0}^{a-1}\frac{1}{1-\epsilon_{\ell}}\rba=\lba\sum_{\ell=a}^{b-1}\frac{1}{1-\epsilon_{\ell}}\rba\geq \frac{1}{v_{\max}}\lba b-a\rba.
\end{equation}

Hence, for every $j=0,1,\dots,m-1$ and $\forall\,n>j\lsepk$, we have 
\begin{align}
	\left| I_{j,n} -\kappa_{j,n} \right| &\leq v_{\mathrm{max}}\lba T(I_{j,n})-T(\kappa_{j,n})\rba\\
	&\leq v_{\mathrm{max}}\cdot\max\lbp \lba T(I_{j,n})-\lp n-j\lsepk-\frac{1}{v_{\mathrm{min}}}\rp\rba, \bigg| T(I_{j,n})-\lp n-j\lsepk\rp\bigg| \rbp\\
	&\leq v_{\mathrm{max}}\lba T(I_{j,n}) - (n-j\lsepk)\rba + \frac{v_{\mathrm{max}}}{v_{\mathrm{min}}},
\end{align}
where the first inequality is due to \eqref{eq:bound_diff_T}, and the second inequality holds by the definition of $\kappa_{j,n}$ in \eqref{eq: kappa_j_het}. 

The above inequality leads to a necessary condition of $\mcal{E}_j$:
\begin{align}
&\max\limits_{j\lsepk<n\leq\tau_j}\lba I_{j,n}-\kappa_{j,n}\rba\geq \frac{1}{4}v_{\min}\lsepk \\
\implies 
&\max\limits_{j\lsepk< n\leq \tau_j} \lba T(I_{j,n}) - (n-j\lsepk)\rba \geq \frac{1}{4}\frac{v_{\mathrm{min}}}{v_{\mathrm{max}}}\lsepk - \frac{1}{v_{\mathrm{min}}}. 
\end{align}
Now, since $\lsepk=ck^{0.5+\dsep}$ increases as $k$ grows by our design, the term at the right-hand-side of the last inequality $\frac{1}{4}\frac{v_{\mathrm{min}}}{v_{\mathrm{max}}}\lsepk - \frac{1}{v_{\mathrm{min}}}$ becomes larger than $\frac{1}{5}\frac{v_{\mathrm{min}}}{v_{\mathrm{max}}}\lsepk$ when $k$ is large enough. 
Therefore, for large enough $k$, we can bound the probability of event $\mcal{E}_j$ as follows:
\begin{align}
\Pr\lbp\mcal{E}_j\rbp&= \Pr\lbp \max\limits_{j\lsepk<n\leq\tau_j}\lba I_{j,n}-\kappa_{j,n}\rba\geq \frac{1}{4}v_{\min}\lsepk\rbp\\
&\leq\Pr \lbp \max\limits_{j\cdot\lsepk< n\leq \tau_j} \lba T(I_{j,n}) - (n-j\lsepk)\rba \geq \frac{1}{5}\frac{v_{\mathrm{min}}}{v_{\mathrm{max}}}\lsepk\rbp\\
&\leq 2\exp\lp \frac{-\lp\frac{v_{\mathrm{min}}}{5v_{\mathrm{max}}}\lsepk\rp^2}{2\lp\tau_j-j\lsepk\rp\cdot\max\lp\frac{1}{v_{\mathrm{min}}}-1, 1\rp^2}\rp,
\end{align}
where 
the last inequality is due to the maximal Azuma-Hoeffding inequality.
\end{IEEEproof}

To this end, the proof of Proposition~\ref{prop: colli_het} is complete by Lemma~\ref{lem: esc_hetero} and the union bound \eqref{eq:union_bd_prop_again}.
\end{IEEEproof}



Let us now turn to the analysis of the achieved IV. Note that the overall delay is the decoding time of bit $b_{m-1}$, which is
\begin{equation}
	\tau_{m-1}=\sum_{i=0}^{k-1}\frac{1}{1-\epsilon_i} + \lp \lp m-1\rp+\frac{1}{4}\rp\lsepk=\sum_{i=0}^{k-1}\frac{1}{1-\epsilon_i} + \lp m-\frac{3}{4}\rp ck^{0.5+\dsep}.
\end{equation}

For the constant regime $m=\Theta(1)$, we shall choose $\dsep < 0.5$ so that $\tau_{m-1} = \sum_{i=0}^{k-1}\tfrac{1}{1-\epsilon_i} + o(k)$, and by taking $k\ra\infty$, the target velocity $\frac{1}{\zeta\lp\lbp\epsilon_i\rbp\rp}$ is achieved with error probability being $e^{-o(k)}$ by Proposition~\ref{prop: colli_het}. Similarly, for the polynomial regime $m=\Theta(k^\rho)$, if $0<\rho<1/2$, we can choose $\dsep < 0.5-\rho$ so that $\tau_{m-1} = \sum_{i=0}^{k-1}\tfrac{1}{1-\epsilon_i} + o(k)$, and by taking $k\ra\infty$, velocity $\tfrac{1}{\zeta\lp\lbp\epsilon_i\rbp\rp}$ is achieved with error probability being $e^{-o(k^{1-2\rho})}$ by Proposition~\ref{prop: colli_het}.

%% file: arXiv_v1/app_achieve_w_GSI.tex
We consider the homogeneous case as stated in Theorem \ref{thm: gsi}, where $\epsilon_i=\epsilon$ for all $i$.

Follow the GSI-control scheme we state in Section \ref{sec: achieve_with_gsi}, we now view each message bit as a customer, each hop as a FIFO server whose service time is $\mathrm{Geom}(1-\epsilon)$ with infinite long queue. Say the customers are queued orderly in the server 0 (hop 0), then the total transmition delay is the departure time of the $(m-1)$-th customer ($b_{m-1}$) from the $(k-1)$-th server (hop ($k-1$)), where the servers are cascaded.

Denote the $j$-th customer's departure time from the $i$-th server by $D_{j,i}$. Since all servers follow the FIFO policy, services for the customers are in order anywhere. The service of the $j$-th customer at the $i$-th server starts immediately once that $i$-th server's queue has no former customer than $j$ (the $(j-1)$-th customers has been served at the $i$-th server) and that the $j$-th customer has arrived the $i$-th server (the $j$-th customer has been served at the $(i-1)$-th server). Thus we have the following recursive relation: For any $i,j>0$,
\begin{equation}
\begin{aligned}
D_{0,0}&=S(0,0)\\
D_{0,i}&=D_{0,i-1} + S(0,i)\\
D_{j,0}&=D_{j-1,0} + S(j,0)\\
D_{j,i}&=\max\lbp D_{j-1,i}+D_{j,i-1}\rbp + S(j,i),
\end{aligned}
\end{equation}
where $S(j,i)\overset{\scriptscriptstyle\mathrm{i.i.d.}}{\sim} \mathrm{Geom}(1-\epsilon)$ is the geometric service time of the $j$-th customer at the $i$-th queue.

This by recursion gives that
\begin{equation}
D_{m-1,k-1}=\max_{\begin{subarray}{c}\text{up-right path }p:\\(0,0)\ra(m-1,k-1)\end{subarray}}\sum_{(j,i)\in p} S(j,i)=:T(m,k).
\end{equation}
The term $T\lp\cdot,\cdot\rp$ is exactly the \emph{last-passage-percolation time} with geometric distributed units studied in \cite{Seppaelaeinen1998, Johansson2000}.

Then we introduce the following proposition that tells the large deviation behavior of $T(m,k)$ when $m$ and $k$ are in linear scale to each other.
\begin{proposition}\cite[Theorem 1.1]{Johansson2000}\label{prop: LLP_large_dev}
	For any $\delta,\gamma>0$,
	\begin{equation}
	\lim_{k\ra\infty}-\frac{1}{k}\log\Pr\lbp T\lp\left\lfloor \gamma k\right\rfloor,k \rp \geq k \lp\omega(\epsilon, \gamma) + \delta\rp \rbp= c(\epsilon, \gamma, \delta),
	\end{equation}
	where
	\begin{equation}
	\omega\lp\epsilon, \gamma\rp := \frac{1+2\sqrt{\epsilon\gamma} + \gamma}{1-\epsilon}, \text{ and } c(\epsilon, \gamma, \delta)>0\text{ if } \delta>0.
	\end{equation}
\end{proposition}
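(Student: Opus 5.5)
The statement is quoted from \cite[Theorem 1.1]{Johansson2000}. In the paper it can simply be invoked. If I were to justify it directly, I would split it into two parts: the existence of the limit, and the strict positivity of $c(\epsilon,\gamma,\delta)$. I only need the latter for the achievability of Theorem~\ref{thm: gsi}, where the decoding time is taken as $k(\omega(\epsilon,\gamma)+\delta)$.

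\emph{Existence of the limit.} Write $p_k \eqDef \Pr\{T(\lfloor\gamma k\rfloor,k)\ge k(\omega+\delta)\}$. I would concatenate two rectangles along the diagonal: the block of size $\lfloor\gamma a\rfloor\times a$ followed by a disjoint translated block of size $\lfloor\gamma b\rfloor\times b$. An up-right path through both blocks shows that $T(\lfloor\gamma(a+b)\rfloor,a+b)$ dominates the sum of the two independent block LPP values, up to a rounding of one row or column. Since the weights are nonnegative, the rounding can only help. This gives $p_{a+b}\ge p_a p_b$, so $-\log p_k$ is subadditive, and Fekete's lemma yields the limit $c(\epsilon,\gamma,\delta)\in[0,\infty]$. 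It is finite because $\Pr\{S(j,i)\ge x\}=\epsilon^{x-1}$ decays only exponentially: forcing, say, the weights on one fixed path to be large costs only $e^{-O(k)}$.

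\emph{Positivity, via the stationary LPP (Burke property).} For a parameter $u\in(\epsilon,1)$, I would augment the lattice with boundary weights. The bottom row gets i.i.d. geometric weights with one parameter, and the left column gets i.i.d. geometric weights with a dual parameter, tuned so that the last-passage increments are stationary. Call the resulting passage time $G^{u}(m,k)$. By monotonicity, $T(m,k)\le G^{u}(m,k)$, since the original paths are a subset of the boundary paths and the weights are nonnegative. Burke's property says $G^{u}(m,k)$ equals the sum of the horizontal increments along the top row plus the vertical increments along the right column. Each of these two sums is a sum of i.i.d. geometric variables, with means $\mu_H(u)$ and $\mu_V(u)$ respectively.

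\emph{Tail bound and choice of $u$.} A union bound gives
\begin{equation*}
\Pr\{T\ge k(\omega+\delta)\}\le \Pr\{\textstyle\sum_{m}H\ge \gamma k\mu_H+\tfrac{k\delta}{2}\}+\Pr\{\textstyle\sum_{k}V\ge k\mu_V+\tfrac{k\delta}{2}\},
\end{equation*}
provided $\gamma\mu_H(u)+\mu_V(u)\le \omega$. Cram\'er's theorem then gives exponential decay at a positive rate, so $c>0$. The remaining calculus step is to minimize $\gamma\mu_H(u)+\mu_V(u)$ over $u$ and check that the minimum is $(1+2\sqrt{\epsilon\gamma}+\gamma)/(1-\epsilon)$. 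This is the standard computation recovering the Johansson/Sepp\"al\"ainen shape function, with the minimizer solving a quadratic in $\sqrt{u}$.

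\emph{Main obstacle.} I expect the hardest step to be setting up the correct stationary boundary for the geometric model, in particular the exact parameterization of the boundary geometrics, and verifying Burke's property for that model. The matching of the optimized mean to $\omega(\epsilon,\gamma)$ is routine once the boundary is right. Establishing the exact identification of the limit $c$, rather than just its positivity, would require Johansson's determinantal analysis, and I would not attempt that; positivity is all the achievability proof uses.
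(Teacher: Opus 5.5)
Your argument is correct, but it takes a genuinely different route from the paper. The paper gives no proof and simply imports \cite[Theorem 1.1]{Johansson2000}. There the upper-tail rate comes from the determinantal (Meixner-ensemble) representation of geometric LPP, which gives $c(\epsilon,\gamma,\delta)$ explicitly and also yields the speed-$k^2$ lower tail; the achievability proof uses neither.

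Your approach combines superadditivity ($p_{a+b}\ge p_a p_b$, hence the limit by Fekete) with domination $T\le G^{u}$ by the stationary Burke model and a Chernoff bound. It is elementary and self-contained. Since the proposition never specifies $c$, it proves the statement in full, not only the positivity half you say you need.

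The calculus you call routine closes with no slack. Write $\mathrm{Geom}_0(b)$ for the law $(1-b)b^j$ on $j\ge 0$, and take bulk weights $1+\mathrm{Geom}_0(\epsilon)$. Then boundary increments $1+\mathrm{Geom}_0(u)$ and $1+\mathrm{Geom}_0(\epsilon/u)$, $u\in(\epsilon,1)$, are stationary, with means $\frac{1}{1-u}$ and $\frac{u}{u-\epsilon}$. The optimality condition is $\sqrt{\gamma}\,(u-\epsilon)=\sqrt{\epsilon}\,(1-u)$, and the minimum of $\frac{\gamma}{1-u}+\frac{u}{u-\epsilon}$ is $1+\frac{(\sqrt{\gamma}+\sqrt{\epsilon})^2}{1-\epsilon}=\omega(\epsilon,\gamma)$.

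Your route buys something further: the bound is explicit and non-asymptotic for every $\gamma>0$. It therefore directly covers the paper's later use with $\gamma=\Theta(k^{\rho-1})$, which the fixed-$\gamma$ statement does not literally cover. With the cited result one must add monotonicity of $T$ in $m$ and the fact that $\omega(\epsilon,\gamma_0)\downarrow\frac{1}{1-\epsilon}$ as $\gamma_0\downarrow 0$.

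One slip to fix: $G^{u}(m,k)$ is not the sum of the top-row horizontal increments and the right-column vertical increments. That sum equals $2G^{u}(m,k)-G^{u}(0,k)-G^{u}(m,0)$. The correct decomposition telescopes along a single staircase. For example, take the bottom-row boundary weights plus the right-column vertical increments, or the left column plus the top row. Each of these two sums is i.i.d.\ by Burke's property. Your union bound never uses independence between the two sums, so the rest of the argument is unchanged.
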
  

We now turn to analyze the achieved IV. As we state above, the required time for the last message bit to arrive at the destination (node $k$) is exactly $D_{m-1,k-1}=T(m,k)$. 

Therefore, for the polynomial regime where $m=\Theta\lp k^{\rho}\rp$, we set the decoding time simply as $k\lp\frac{1}{1-\epsilon}+\delta'\rp$ for arbitrary $\delta'>0$. Then, by applying Proposition \ref{prop: LLP_large_dev} with $\gamma=\Theta(k^{\rho-1})$, any IV below $1-\epsilon$ is achievable for error probability with exponential decay in $k$. Hence,
\begin{equation}
	v^{\gsi}_{\poly{\rho}}\lp \epsilon, \lbp\xi_i\rbp\rp \geq 1-\epsilon, \text{ for any }\lbp\xi_i\rbp \text{ such that }\xi_k=e^{-o(k)}.
\end{equation}

Correspondinly, for the linear regime where $m=\alpha k + o(k)$, we set the decoding time as $k\lp\omega(\epsilon, \alpha)+\delta''\rp$, for arbitrary $\delta''>0$. Then, by applying Proposition \ref{prop: LLP_large_dev} with $\gamma=\alpha$, any IV below $1/\omega\lp\epsilon,\alpha\rp$ is achievable for error probability with exponential decay in $k$. Hence,
\begin{equation}
	v^{\gsi}_{\linear{\alpha}}\lp\epsilon,\lbp\xi_i\rbp\rp\geq \frac{1}{\omega\lp\epsilon,\alpha\rp} =\frac{1-\epsilon}{1 + 2\sqrt{\epsilon\alpha}+\alpha},\text{ for any }\lbp\xi_i\rbp \text{ such that }\xi_k=e^{-o(k)}.
\end{equation}

This completes the achievability part of Theorem \ref{thm: gsi}.

%% file: IV_BEC_wo_GSI_arXiv.bbl
\begin{thebibliography}{1}
\providecommand{\url}[1]{#1}
\csname url@samestyle\endcsname
\providecommand{\newblock}{\relax}
\providecommand{\bibinfo}[2]{#2}
\providecommand{\BIBentrySTDinterwordspacing}{\spaceskip=0pt\relax}
\providecommand{\BIBentryALTinterwordstretchfactor}{4}
\providecommand{\BIBentryALTinterwordspacing}{\spaceskip=\fontdimen2\font plus
\BIBentryALTinterwordstretchfactor\fontdimen3\font minus
  \fontdimen4\font\relax}
\providecommand{\BIBforeignlanguage}[2]{{%
\expandafter\ifx\csname l@#1\endcsname\relax
\typeout{** WARNING: IEEEtran.bst: No hyphenation pattern has been}%
\typeout{** loaded for the language `#1'. Using the pattern for}%
\typeout{** the default language instead.}%
\else
\language=\csname l@#1\endcsname
\fi
#2}}
\providecommand{\BIBdecl}{\relax}
\BIBdecl

\bibitem{ElGamalKim_11}
A.~E. Gamal and Y.-H. Kim, \emph{Newtork Information Theory}.\hskip 1em plus
  0.5em minus 0.4em\relax Cambridge University Press, 2011.

\bibitem{Huleihel2019}
W.~Huleihel, Y.~Polyanskiy, and O.~Shayevitz, ``Relaying one bit across a
  tandem of binary-symmetric channels,'' in \emph{2019 IEEE International
  Symposium on Information Theory (ISIT)}, 2019, pp. 2928--2932.

\bibitem{Rajagopalan1994}
\BIBentryALTinterwordspacing
S.~Rajagopalan and L.~Schulman, ``A coding theorem for distributed
  computation,'' in \emph{Proceedings of the Twenty-Sixth Annual ACM Symposium
  on Theory of Computing}, ser. STOC '94.\hskip 1em plus 0.5em minus
  0.4em\relax New York, NY, USA: Association for Computing Machinery, 1994, pp.
  790--799. [Online]. Available: \url{https://doi.org/10.1145/195058.195462}
\BIBentrySTDinterwordspacing

\bibitem{LingScarlett_22}
Y.~H. Ling and J.~Scarlett, ``Simple coding techniques for many-hop relaying,''
  \emph{IEEE Transactions on Information Theory}, vol.~68, no.~11, pp.
  7043--7053, 2022.

\bibitem{DomanovitzPhilosof_22}
E.~Domanovitz, T.~Philosof, and A.~Khina, ``The information velocity of
  packet-erasure links,'' in \emph{IEEE INFOCOM 2022 - IEEE Conference on
  Computer Communications}, no. 190-199, 2022.

\bibitem{Inovan2024}
\BIBentryALTinterwordspacing
R.~Inovan, ``\BIBforeignlanguage{en}{On speed and advantage: Results in
  information velocity and monitoring problems},'' Ph.D. dissertation, EPFL,
  Lausanne, 2024. [Online]. Available:
  \url{https://infoscience.epfl.ch/handle/20.500.14299/207871}
\BIBentrySTDinterwordspacing

\bibitem{Domanovitz2024}
E.~Domanovitz, A.~Khina, T.~Philosof, and Y.~Kochman, ``Information velocity of
  cascaded {G}aussian channels with feedback,'' \emph{IEEE Journal on Selected
  Areas in Information Theory}, vol.~5, pp. 554--569, 2024.

\bibitem{Seppaelaeinen1998}
T.~Sepp{\"a}l{\"a}inen, ``Hydrodynamic scaling, convex duality, and asymptotic
  shapes of growth models,'' \emph{Markov Processes and Related Fields}, 01
  1998.

\bibitem{Johansson2000}
K.~Johansson, ``Shape fluctuations and random matrices,'' \emph{Communications
  in Mathematical Physics}, vol. 209, no.~2, pp. 437--476, Feb. 2000.

\end{thebibliography}
